\newcommand{\<}{\left \langle}
\renewcommand{\>}{\right \rangle}
\renewcommand{\P}{\mathrm{\mathbf{P}}}
\newcommand{\1}{\mathds{1}}
\newcommand{\N}{\mathbf{N}}
\def\avevec{v}
\DeclareMathOperator{\var}{var}
\DeclareMathOperator{\cov}{cov}
\newcommand{\convdist}{\overset{\rm d}{\to}}
\newcommand{\erf}{\mathrm{erf}}
\newcommand{\Real}{\mathbb{R}}
\newcommand{\eps}{\varepsilon}
\newtheorem{theorem}{Theorem}[section]
\newtheorem{assumption}[theorem]{Assumption}
\newtheorem{lemma}[theorem]{Lemma}
\renewcommand{\t}{{\textrm{\tiny \rm T}}}
\newcommand{\us}{{\text{\tiny EMUS}}}
\newcommand{\zmb}{z^{\text{\tiny  MBAR}}}
\newcommand{\mb}{{\text{\tiny  MBAR}}}
\newcommand{\ignore}[1]{}
\begin{document}


\title[Eigenvector Method for Umbrella Sampling]{Eigenvector method for umbrella sampling enables error analysis}

\author{Erik Thiede}
\thanks{Equal contributions}
\affiliation{Department of Chemistry and James Franck Institute, the University of Chicago, Chicago, IL 60637}
\author{Brian Van Koten}
\thanks{Equal contributions}
\affiliation{Department of Statistics and James Franck Institute, the University of Chicago, Chicago, IL 60637}
\author{Jonathan Weare}
\email{weare@uchicago.edu}
\affiliation{Department of Statistics and James Franck Institute, the University of Chicago, Chicago, IL 60637}
\author{Aaron R.\ Dinner}
\email{dinner@uchicago.edu}
\affiliation{Department of Chemistry and James Franck Institute, the University of Chicago, Chicago, IL 60637}

\begin{abstract}
Umbrella sampling efficiently yields equilibrium averages that depend on exploring rare states of a model by biasing simulations to windows of coordinate values and then combining the resulting data with physical weighting.  Here, we introduce a mathematical framework that casts the step of combining the data as an eigenproblem.  The advantage to this approach is that it facilitates error analysis. We discuss how the error scales with the number of windows.  Then, we derive a central limit theorem for averages that are obtained from umbrella sampling.  The central limit theorem suggests an estimator of the error contributions from individual windows, and we develop a simple and computationally inexpensive procedure for implementing it.  We demonstrate this estimator for simulations of the alanine dipeptide and show that it emphasizes low free energy pathways between stable states in comparison to existing approaches for assessing error contributions.  We discuss the possibility of using the estimator and, more generally, the eigenvector method for umbrella sampling to guide adaptation of the simulation parameters to accelerate convergence.
\end{abstract}





\maketitle
\footnotetext{Second footnote}
\section{Introduction}
One of the main uses of molecular simulations is the calculation of equilibrium averages.  For understanding reaction processes, the free energy projected onto selected coordinates (collective variables) is of special interest.  It relates directly to the probabilities of the coordinates taking particular values, and  it can provide valuable information about the stable states, the barriers between them, and the origin of their stabilization.  Furthermore, it is the starting point for most rate theories.  Although in principle the free energy can be estimated from a long unbiased simulation, in practice doing so is challenging because bottlenecks slow the exploration of the configuration space.  In other words, transitions between regions of the space are very infrequent in comparison to local fluctuations.

Various methods have been introduced to overcome this problem.  Here, we consider one of the oldest and still most widely used such methods, umbrella sampling (US).\cite{torrie1977nonphysical}  In this approach, the collective-variable interval of interest is covered by a series of simulations, in each of which the system is biased such that sampling is restricted to a relatively narrow window of values of the collective variables. This can be accomplished by addition of a biasing potential that is small in the window and large outside it.  The information from the different simulations must be combined, and the effect of the bias removed, to obtain the overall free energy profile.  This requires consistently normalizing the probabilities in different windows, a  task  that is complicated by the fact that the simulations are run independently.


Considerable effort has been devoted to determining how best to combine the results from different simulations.  Initially, researchers manually adjusted the zero of free energy in each window to make the full free energy profile continuous and, often, smooth; conflicting results arising from limited sampling at the window peripheries were removed.  The desire to use all the simulation data motivated the introduction of estimators that allow for systematically combining the data from different simulations.  By far, the most widely used of these in chemical physics applications is the weighted histogram analysis method (WHAM). The multistate Bennett acceptance ratio (MBAR)  method, as it is referred to in the the molecular-simulation literature and will be referred to here, is closely related but does not rely on binning the data.\cite{vardi1985empirical,gill1988,kumar1992weighted,shirts2008statistically}  Both WHAM and MBAR can be derived from maximum-likelihood or minimum asymptotic variance principles assuming independent, identically distributed sampling in each window, and have  corresponding statistical optimality properties under those conditions.  Recent extensions seek to improve performance when the sampling is limited and to extend the algorithm to more general ensembles.\cite{rosta2014free,mey2014xtram}


In the present paper, we introduce an alternative scheme for estimating the free energy from US simulation data.  In this approach, the normalization constants needed to combine information from separate simulation windows are the components of the eigenvector of a stochastic matrix that can be constructed from running averages in the windows.  We thus term our method Eigenvector Method for Umbrella Sampling (EMUS).  The advantage of our method is that it lends itself to error analysis.  Following previous work,\cite{tan2012theory,lelievre2010free,minh2009optimal} we measure error with the asymptotic variance.

Our paper is organized as follows.  After giving some background on US in Section \ref{sec:background}, we formulate EMUS in Section \ref{sec:EMUS_basics}.  In Section \ref{ssec:numeric_comparison}, we show that EMUS performs comparably to WHAM and MBAR, and discuss its connection with the latter.  In Section \ref{sec:scaling}, we use  scaling arguments with simplifying assumptions to show that accounting for the error associated with combining the data is important and limits the speedups that can be achieved by increasing the number of simulation windows.  In Section \ref{sec:analysis}, we provide the full numerical analysis, which applies generally, without simplifying assumptions.  Specifically, we derive a central limit theorem for averages from EMUS and use it to develop a means for estimating the error contributions from individual windows.  We demonstrate the method for the free energy projected onto the $\phi$ and $\psi$ dihedral angles of the alanine dipeptide and compare the error contributions with those from an estimator introduced by Zhu and Hummer \cite{zhu2012convergence}.  We conclude in Section \ref{sec:conclusions}.

\section{Background on Umbrella Sampling} \label{sec:background}

Here, we review umbrella sampling and establish basic terms and notation.
The goal
is the calculation of an average of an observable $g$ over a time-independent
probability distribution $\pi$:
\begin{equation}\label{eq:thermo_avg}
\<g\> = \int g(x) \pi(x) dx.
\end{equation}
At thermal equilibrium, $\pi$ is the Boltzmann distribution:
\begin{equation}
\pi(x) = \frac{\exp \left(- H_0(x) / k_B T \right)}{\int \exp \left(- H_0(x)/ k_B T \right) dx},
\end{equation}
where $H_0$ is the system Hamiltonian,  $k_B$ is Boltzmann's constant, and $T$ is the temperature.
In particular, we can express the the free energy difference between two states $S_1$ and $S_2$ as
\begin{equation}\label{eq:defn_FE}
\Delta G = -k_B T \ln \left( \frac{\< \1_{S1} \>}{ \< \1_{S2} \>}\right),
\end{equation}
where  $\1$ is the indicator function
\begin{equation}\label{eq:defn_indicator}
\1_S(x) =
	\begin{cases}
		1 & \text{if }x \in S, \text{ and } \\
		0 & \text{otherwise}.
	\end{cases}
\end{equation}
Similarly, the reversible work to constrain a collective variable $q(x)$ to a particular value $q'$, also known as the potential of mean force (PMF), may be written as
\begin{equation}\label{eq:defn_FE_surface}
W(q') = - k_B T \ln \<\delta\left(q-q'\right)\>.
\end{equation}

For complex systems,
averages of the form in \eqref{eq:thermo_avg} must be evaluated numerically.  Typically, this is done by generating a chain of related configurations, $X_t$, using Monte Carlo methods or molecular dynamics, and by assuming ergodicity.  Namely, as the number of configurations $N$ goes to infinity,  $\< g\>$ is the limit of the sample mean:
\begin{equation}\label{eq:defn_sample_mean}
  \bar{g} = \frac{1}{N} \sum_{t=0}^{N-1}g(X_t).
\end{equation}
In all practical sampling methods, successive configurations are strongly correlated.
While ergodicity guarantees that sample means converge to averages over $\pi$,
convergence can be extremely slow if the correlation between subsequent points is strong.
This is the case when sampling $\pi$ relies on visiting low-probability states, such as transition states of chemical reactions.

US methods address this issue by enforcing sampling of different regions of configuration space (windows), introducing $L$ nonnegative \textit{bias functions} $\psi_i$ and then using $L$ independent simulations to sample from the biased probability distributions
\begin{equation}\label{eq:defn_pi_i}
\pi_i(x) \propto \psi_i(x) \pi(x).
\end{equation}
The essential idea is that sampling each $\pi_i$ is fast because $\psi_i$ is chosen so that relatively likely states under $\psi_i$ are not separated by relatively unlikely states.  This is accomplished by restricting the set of states on which $\psi_i$ is non-negligible so that $\pi$ is closer to constant on that set.  In Section~\ref{sec:low_T_limit} we make this point more carefully by examining   a regime in which umbrella sampling can be shown to be exponentially more efficient than direct simulation.
A popular choice is to use bias functions that take a Gaussian form:
\begin{equation}\label{eq:harmonic_bias_fxn}
\psi_i(q) = \exp \left( -\frac{1}{2}k_i \left(q-q^0_i\right)^2/k_B T\right),
\end{equation}
such that
\begin{equation}\label{eq:harmbf_biased_density}
\pi_i (x) \propto \exp \left[- \left( H_0(x)+\frac{1}{2} k_i \left(q-q^0_i\right)^2\right) / k_B T \right].
\end{equation}
This corresponds to adding a harmonic potential centered at $q_i^0$ with spring constant $k_i$ to the system Hamiltonian.
We call the relative normalization constant (or partition function) of the $i$-th biased distribution $z_i$:
\begin{equation}\label{eq:zi}
 z_i = \frac{\int  \psi_i(x) \pi(x) dx}{\sum_{k=1}^L \int \psi_k(x) \pi(x) dx} .
\end{equation}
We also define the free energy in window $i$ as
\begin{equation} \label{eqn:windowfreeenergy}
G_i = - k_B T \ln z_i.
\end{equation}
We denote averages over the biased distributions by
\begin{equation*}
  \< g\>_i = \int g(x) \pi_i(x) \, dx.
\end{equation*}
Overall averages of interest, $\langle g\rangle$, can be estimated as $z_i$-weighted sum of averages computed in each of the windows.  We detail our prescription in the next section.

\section{The Eigenvector Method for Umbrella Sampling}\label{sec:EMUS_basics}
In this section, we present the Eigenvector Method for Umbrella Sampling (EMUS).
We begin by defining
\begin{equation*}
g^\ast \equiv \frac{g}{\sum_{k=1}^L \psi_k}.
\end{equation*}
for any function $g$. Then, we observe that
\begin{align}
  \<g\> &= \int g(x) \pi(x) \, dx \nonumber \\
         &= \int g(x)\left\{ \frac{\sum_{i=1}^L \psi_i(x)  \left[\frac{\int\psi_i(x)\pi(x)dx}{\int\psi_i(x)\pi(x)dx}\right]}{\sum_{k=1}^L \psi_k(x)}\right\} \pi(x) \, dx
         \nonumber \\
         \nonumber \\
         &= \sum_{i=1}^L \int \psi_i(x) \pi(x) dx\frac{\int g^\ast(x) \psi_i(x) \pi(x)dx}{\int \psi_i(x) \pi(x) dx}
         \nonumber \\
         &= \sum_{i=1}^L z_i\left(\sum_{k=1}^L \int \psi_k(x) \pi(x) dx\right) \< g^\ast\>_i \label{eq:combined_avg0}.
\end{align}
The factor in parentheses can be taken out of the sum over $i$.  To express this factor in terms of computable averages, we repeat the same steps with $g=1$:
\begin{equation}\label{eqn:1star}
\sum_{k=1}^L \int \psi_k(x) \pi(x) dx = \frac{1}{\sum_{i=1}^L z_i \< 1^\ast\>_i}.
\end{equation}
Substituting \eqref{eqn:1star} into \eqref{eq:combined_avg0},
\begin{equation}\label{eq:combined_avg}
\<g\> = \frac{\sum_{k=1}^L z_i \<g^*\>_i}{\sum_{k=1}^L z_i \<1^*\>_i}.
\end{equation}
Consequently, if we can evaluate the $z_i$ and the $\<g^*\>_i$ then we can assemble the original average $\<g\>$ of interest.  The averages  $\<g^*\>_i$ can be computed by sequences $X^i_t$ (typically independent for each $i$) that sample $\pi_i$.
Umbrella sampling methods differ primarily in how the $z_i$ are computed.



To express the constants $z_i$ in terms of averages over the biased distributions, we
take $g(x) = \psi_j(x)$ in~\eqref{eq:combined_avg0}.  Then, $z_i$ solves
\begin{equation}\label{eq:eval_for_z}
z_j = \sum_{i=1}^L z_i F_{ij},
\text{ where } F_{ij} = \< \psi_j^\ast \>_i.
\end{equation}
That is, the vector of normalization constants $z$ is a left eigenvector of the matrix $F$ with eigenvalue one.
Under conditions to be elaborated upon in Section \ref{sec:eigenproblem}, the solution to \eqref{eq:eval_for_z} is uniquely specified when we notice that
\begin{equation}
\sum_{i=1}^L z_i =1.
\label{eq:normalization of z}
\end{equation}
\subsection{Computational Procedure}
In the EMUS algorithm, we estimate the entries of $F$ and the
averages $\< g^\ast \>_i$ and  $\< 1^\ast \>_i$  by sample means, then assemble the estimate
of $\<g\>$ using \eqref{eq:combined_avg}.
To be precise, we denote the sample means by
\begin{align*}
\bar{g}^*_i &= \frac{1}{N_i}\sum_{t=0}^{N_i-1} \frac{g(X_t^i)}{\sum_k \psi_k (X_t^i)}, \\
\bar{1}^*_i &= \frac{1}{N_i}\sum_{t=0}^{N_i-1} \frac{1}{\sum_k \psi_k (X_t^i)},
\text{ and } \\
\bar{F}_{ij} &= \frac{1}{N_i}\sum_{t=0}^{N_i-1} \frac{\psi_j(X_t^i)}{\sum_k \psi_k (X_t^i)}.
\end{align*}
EMUS proceeds as follows:
\begin{enumerate}
\item Choose the biasing functions $\psi_i$.
\item Compute trajectories that sample states $X_t^i$ from the biased distributions $\pi_i$.
\item Calculate the matrix $\bar{F}$ and the averages $\bar{g}_i^*$ and $1_i^\ast$.
\item Calculate the vector of estimated normalization constants $z^\us$ as the solution to 
\[
z_j^\us = \sum_{i=1}^L z_i^\us \bar F_{ij}\quad\text{with}\quad \sum_{i=1}^L z_i^\us =1.
\]
      We use QR factorization as given by Golub and Meyer.\cite{golub1986using}  See also Section \ref{secn:algorithm}.
\item Compute the estimate of $\<g\>$:
\begin{equation}\label{eq:emus estimator}
\langle g \rangle^{\us} =
\frac{\sum_{i=1}^L z_i^\us\, \bar g^\ast_i}{\sum_{i=1}^L z_i^\us\,  \bar 1^\ast_i}
\end{equation}
by substituting $z^\us$ and the sample means in~\eqref{eq:combined_avg}.
\end{enumerate}

We remark that when one wishes to compute a free energy difference or a ratio of two observables, as in equation~\eqref{eq:defn_FE},
it is not necessary to compute $1^\ast_i$.
Instead, one may use the formula
\begin{equation*}
  \frac{\langle \mathds{1}_{S1}\rangle }{\langle \mathds{1}_{S2} \rangle} =
  \frac{\sum_{i=1}^L z_i\, \langle \mathds{1}_{S1} \rangle_i}{\sum_{i=1}^L z_i\, \langle \mathds{1}_{S2}\rangle_i}
\end{equation*}
in place of~\eqref{eq:emus estimator}.

\subsection{The Eigenvector Problem}\label{sec:eigenproblem}

In this section, we give conditions under which the eigenvector problem has a unique solution.
First, we show that $F$ is a stochastic matrix;
that is, each element $F_{ij}$ is nonnegative and every row of $F$ sums to one.  For the latter,
\begin{equation*}
\sum_{j=1}^L F_{ij} = \sum_{j=1}^L   \< \frac{\psi_j}{\sum_{k=1}^L \psi_k}\>_i
=  \<\frac{\sum_{j=1}^L  \psi_j}{\sum_{k=1}^L \psi_k}\>_i
=1.
\end{equation*}
The entries of $F$ are nonnegative since we require that the bias functions be nonnegative.
One can show that the matrix $\bar F$ is also stochastic by similar arguments.

A stochastic matrix $J$ has a unique eigenvector with eigenvalue one
if and only if it is irreducible: for every possible grouping of the indices into two distinct sets, $A$ and $B$,  $J_{ij} \neq 0$ for some $i\in A$ and $j\in B$.\cite{schneider1977concepts}
In fact, this statement remains true when $J$ is   non-negative with largest eigenvalue equal to one.  For any such matrix we let $z(J)$ denote the continuous function returning the unique left eigenvector of $J$ corresponding to eigenvalue one.

In the case of the particular stochastic matrix $F$ defined in \eqref{eq:eval_for_z} these statements imply that if, 
 for any division of the indices into sets $A$ and $B,$ there is sufficient overlap between the sets $\cup_{i\in A} \{x:\, \psi_i(x)>0\}$ and
$\cup_{j\in B} \{x:\, \psi_j(x)>0\}$ then there will be a unique solution $z(F)$ to \eqref{eq:eval_for_z} which will necessarily equal the relative normalization constants $z$ defined in \eqref{eq:zi}.
Because  $z(J)$ is a continuous function of its arguments, $z^\us= z(\bar F)$ converges to $z$ as $\bar F$ converges to $F.$  Consequently, EMUS produces a consistent estimator in the sense that if the sample averages used to estimate the entries  $F_{ij}$ and $\<g^*\>_i$ converge (in probability or with probability one) to the true values, then the estimate of $\<g\>$ also converges (in the same sense). 

\section{The Connection between EMUS and MBAR}

Building upon earlier work in the statistics literature\cite{vardi1985empirical, gill1988,tan2004likelihood}, Shirts and Chodera\cite{shirts2008statistically} suggested a class of algorithms for estimating free energy differences between states, which they termed MBAR.  This method is similar to WHAM but does not require binning the simulation data to form histograms (see Tan {\it et al.}\cite{tan2012theory}).
In this section, we explain the relation between EMUS and MBAR \cite{shirts2008statistically}.  We also derive a new iterative method for solving the MBAR equations,
and we show that our iteration leads naturally to a new family of related consistent estimators.

Shirts and Chodera's\cite{shirts2008statistically} starting point is the identity (see their (5))
\begin{equation}\label{eq:MBAR_family}
z_j \sum_{i=1}^L \<\alpha_{ij}(x)\psi_i(x)\pi(x)\>_j= \sum_{i=1}^L z_i \<\alpha_{ij}(x)\psi_j(x)\pi(x)\>_i,
\end{equation}
where $\alpha_{ij} (x)$ is an arbitrary function.  
They proposed the choice
\begin{equation}\label{eq:MBAR_estimator}
\alpha_{ij}^\mb (x) = \frac{n_i/ z_i}{\sum_k  \psi_k (x) \pi(x)n_k /z_k},
\end{equation}
where $n_i$ is the number of uncorrelated samples in window $i$.
Substituting \eqref{eq:MBAR_estimator} into \eqref{eq:MBAR_family} gives
\begin{equation}\label{eq:MBAR_eval}
z_j = \sum_{i=1}^L z_i  \< \frac{\psi_j (x) n_i / z_i }{\sum_k \psi_k (x) n_k / z_k } \>_i.
\end{equation}

We can cast \eqref{eq:MBAR_eval} in a form reminiscent of EMUS by writing
\begin{equation}\label{eq:MBAR_eval2}
z_j = \sum_{i=1}^L z_i F_{ij}(z),
\end{equation}
where
\begin{equation}\label{eq:MBAR_F}
F_{ij}(w) =  \< \frac{\psi_j (x)n_i  / w_i }{\sum_k \psi_k (x) n_k / w_k } \>_i
\end{equation}
for any vector $w$ with positive entries.
EMUS corresponds to setting $w = n$ so that
\begin{equation*}
\alpha_{ij}^\us (x) = \frac{1}{  \sum_k \pi(x) \psi_k (x)},
\end{equation*}
and \eqref{eq:MBAR_family} reduces to the eigenproblem~\eqref{eq:eval_for_z}.

In practice, one must replace the matrix $F_{ij}(w)$ in \eqref{eq:MBAR_F} by the sample mean approximation
\begin{equation}\label{eq:MBAR_Fapprox}
\bar{F}_{ij}(w) =   \frac{1}{N_i}\sum_{t=0}^{N_i-1} \left[\frac{\psi_j \left( X_t^i \right)n_i  / w_i }{\sum_k  \psi_k \left( X_t^i \right)n_k  /w_k}\right].
\end{equation}
Substituting $\bar F_{ij}(z)$ for $F_{ij}(z)$ in \eqref{eq:MBAR_eval2} yields the equation
\begin{equation}\label{eq:MBAR_approx}
\zmb_j = \sum_{i=1}^L \zmb_i \bar F_{ij}({\zmb})
\end{equation}
for $\zmb$, which we refer to here as the MBAR estimator.
If the samples $X_t^i$ are independent, MBAR is the nonparametric maximum-likelihood estimator of $z$.\cite{vardi1985empirical}

In practice,  the samples $X^i_t$ are not independent for a given $i$, and 
the $n_i$  must be estimated from data.  Several algorithms for estimating the $n_i$ have been proposed.\cite{frenkel2001understanding,geyer1992practical} Shirts and Chodera\cite{shirts2008statistically} base their estimates on the integrated autocorrelation times of physically-motivated coordinates, and we follow this common practice here.
In fact,
once the $n_i$ have been estimated,
Shirts and Chodera\cite{shirts2008statistically}   suggest replacing sample averages over all $N_i$ points by sample averages over the $n_i$ points obtained by including only every $N_i/n_i$-th sample along the trajectory.
We note that both the subsampling approach and the one in \eqref{eq:MBAR_Fapprox} correspond to approximations of expression \eqref{eq:MBAR_family} with \eqref{eq:MBAR_estimator}, and we regard both as variations on the MBAR estimator.
When the samples are independent, the two approaches are the same.
In tests of the iterative EMUS algorithm introduced below, we find estimates to be insensitive to the choice of $n_i$ and they can be set equal to 1, though in that case the estimator no longer corresponds directly to MBAR.


As written above, the MBAR estimator \eqref{eq:MBAR_approx} resembles an eigenvector problem.  However, the dependence of $\bar F(z)$ on $z$ implies that  the solution must be obtained self-consistently.
{The approach advocated by Shirts and Chodera for computing the MBAR estimator corresponds in the framework described here to solving \eqref{eq:MBAR_approx} by a Newton-type iteration.}
However, the eigenvector form of \eqref{eq:MBAR_approx} suggests an alternative approach.
Rather than Newton's method,
we employ the following algorithm:
\begin{enumerate}
\item  As an initial guess for $\zmb$, choose a vector $z^{0}$ with positive entries.
 Estimate the $n_i$. Set $m=0$.
\item{
\begin{enumerate}
\item Calculate $\bar F_{ij}(z^{m})$ according to \eqref{eq:MBAR_Fapprox}.
\item Calculate a new estimate $z^{m+1}$ of $\zmb$ by solving the eigenproblem
\begin{equation}\label{eq:EMUS_it}
z_j^{m+1} = \sum_{i=1}^L z_i^{m+1} \bar F_{ij}(z^{m}).
\end{equation}
\end{enumerate}
}
\item{If  $\max_i |z_i^{m+1}-z_i^{m}|/z_i^{m} > {\rm Tolerance}$,
\begin{enumerate}\item Increment $m$; \item Go to Step 2. \end{enumerate}
}
\end{enumerate}

To show that this iteration makes sense,
we must prove that the eigenproblem~\eqref{eq:EMUS_it}
always has a unique solution and that $z^m$ converges to $\zmb$ as $m$ goes to infinity.
To see that the eigenproblem has a solution,
first observe that if $\bar{F}_{ij}(w)$ is irreducible for one vector with positive entries,
$w$, then it is irreducible for all vectors with positive entries.
When applying the EMUS method,
we assume that  both $F= F(n)$ and $\bar{F} = \bar{F}(n)$ are irreducible.
Thus, we may assume that $\bar{F}(w)$ is irreducible for any $w$.
Moreover, observe that for any positive vector $w$, the vector with entries $n_i/w_i$  
is a right eigenvector of $\bar{F}(w)$ with eigenvalue one and positive entries.
It follows from the Perron-Frobenius theorem that the matrix $\bar{F}(w)$
has a unique left eigenvector $z(\bar{F}(w))$ with eigenvalue one and that $z(\bar{F}(w))$ has positive entries.
Thus, the eigenproblem always has a unique solution.
We do not have a proof that the iterates converge.
However, since $\zmb$ is a fixed point of the iteration,
if the iterates do converge their limit must be $\zmb$.
In practice, we find that the iteration converges quickly,
usually to a relative error of $10^{-6}$ within 10 iterates.

In addition to its apparently rapid convergence, another argument in favor of the algorithm that we introduce above for solving \eqref{eq:MBAR_approx}
is that each iteration of the scheme results in a new consistent estimator.
We will use the term iterative EMUS to refer to this family of estimators.
With the initial guess $z^0 = n$,  
the result, $z^1$, of the first iteration is the EMUS estimator defined in Section \ref{sec:EMUS_basics}.
In Appendix~\ref{apdx: consistency}, we show that for any fixed finite number of iterations $m$, $z^m$ is also a consistent
estimator of the vector $z$ of normalization constants.  
By contrast, other schemes, such as Newton's method, for solving~\eqref{eq:MBAR_approx}
may require that the number of iterations goes to infinity to obtain a consistent estimate.
{We also remark that the consistency result in Appendix~\ref{apdx: consistency} holds as long as the $n_i$ converge  to non-random, positive values with increasing numbers of samples $N_i$.  They can be chosen as described above, or simply set to a  fixed value.}

Differences between the iterative EMUS scheme above
and the application of Newton's method proposed by Shirts and Chodera~\cite{shirts2008statistically}
are mostly matters of implementation.
As we will see in the next section, the results are not very sensitive to these computational details;
most of the accuracy in the iterative EMUS approach is achieved in the first step.
In any case, we remind the reader that the primary goal of this paper is to characterize those properties of the broader umbrella sampling approach that are essential to its success, not to analyze details of implementation.

While we focus here on potentials of mean force,
the MBAR estimator has been applied to a broader category of free energy problems,
including the analysis of single-molecule pulling experiments and alchemical free energy calculations.\cite{shirts2008statistically,paliwal2013using,shirts2007alchreview}
The close relation between EMUS and MBAR indicates that error analysis of EMUS
may provide insight into the sources of error in MBAR for these problems, but we do not pursue this idea further in the present work.

\section{Numerical Comparison}\label{ssec:numeric_comparison}

To test the algorithm numerically, we performed 100 independent umbrella sampling calculations for the PMF of the $\phi$ coordinate of the alanine dipeptide (i.e., $N$-acetyl-alanyl-$N'$-methylamide) in vacuum.
Simulations were run using GROMACS version 5.1.1 with harmonic bias potentials applied using the PLUMED 2.2.0 software package.\cite{abraham2015gromacs,tribello2014plumed}   The molecule was represented by the CHARMM 27 force field without CMAP corrections,\cite{mackerell2000development} with covalent bonds to hydrogen atoms constrained by the SHAKE algorithm.\cite{ryckaert1977numerical}
Twenty windows were evenly spaced along the $\phi$ dihedral angle.  The force constant $k_i = 0.00760535 \times 10^{-2}$ kcal mol$^{-1}$ degree$^{-2}$ such that the standard deviation of the Gaussian bias functions was 9$^\circ$.
In each window, we integrated the equations of motion with the GROMACS leap-frog Langevin integrator with a 1~fs time step.  The system was equilibrated for 40~ps and then sampled for 100~ps, saving structures every 10~fs.

\begin{figure}
  \includegraphics{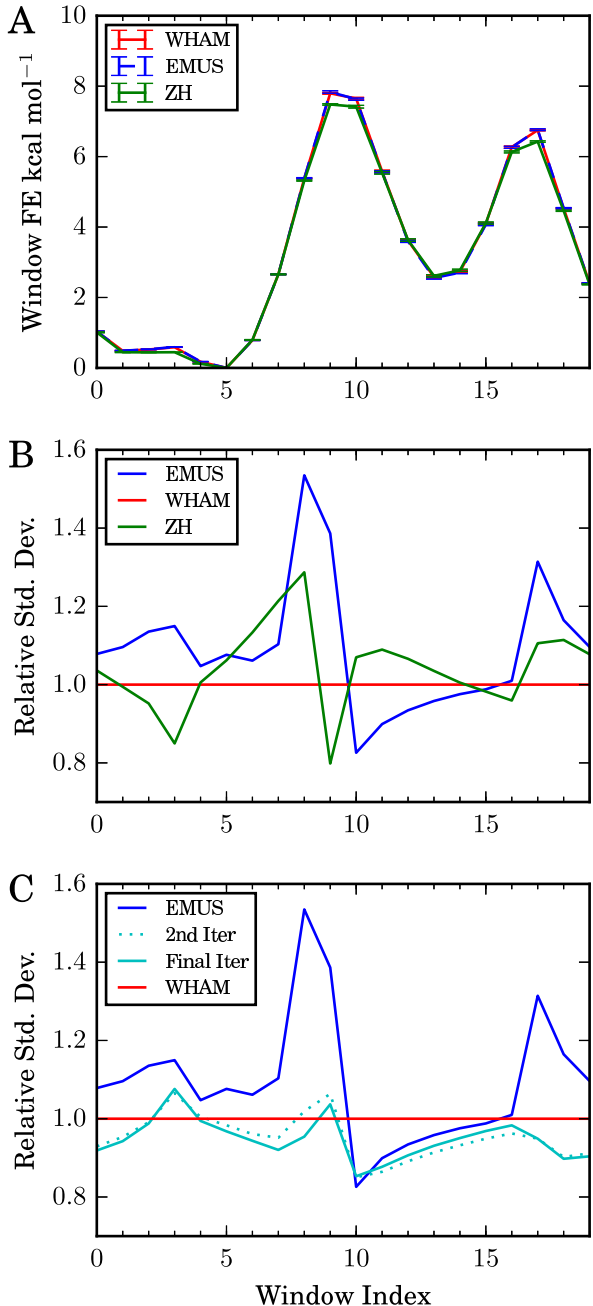}
  \caption{Comparison of umbrella sampling methods applied to simulation data for the alanine dipeptide.  (A) Average window free energies, $G_i$, for the indicated methods.  Error bars are estimated standard deviations of the means.  (B) Standard deviation of each method relative to that of the WHAM algorithm.  Colors are the same as in (A).    (C) EMUS as the first step in a self-consistent iteration to solve the MBAR equations (see text).  The number of uncorrelated samples in each window ($n_i$) was estimated by calculating the integrated autocorrelation of the $\phi$ dihedral angle from each trajectory.  Results shown are for identical molecular dynamics data (see text for simulation details); the methods differ only with respect to combination of the data to estimate the free energies.}
  \label{fig:norm_FE_comp}%
\end{figure}


The data was then analyzed with EMUS, Grossfield's implementation of WHAM,\cite{grossfieldWHAM} and  the algorithm proposed by Zhu and Hummer (ZH) (see equation A1 and following discussion in Appendix A).\cite{zhu2012convergence}  The data was also analyzed with pyMBAR;\cite{shirts2008statistically} as pyMBAR gave results virtually identical to WHAM, the results are not shown.
In Figure \ref{fig:norm_FE_comp}, we show the resulting average potentials of mean force, as well as the standard deviation of the estimates over the 100 runs.
WHAM and EMUS converge to the same result.
This is to be expected, as both algorithms are consistent (i.e., they converge to the exact result as the amount of samples in each window tends to infinity; see Section \ref{sec:analysis}), although WHAM exhibits a small bias from the binning of data for the histograms.\cite{shirts2008statistically}
Although the standard deviations of the free energies are generally higher for EMUS than for WHAM, they are of comparable magnitude.  Compared to the ZH algorithm, EMUS also has a higher standard deviation.  However, ZH is based on thermodynamic integration and uses the trapezoid rule to calculate free energy differences between windows.\cite{zhu2012convergence}  As a consequence it suffers from noticeable quadrature error,\cite{suli2003introduction} which causes the barrier height to converge to an artificially low value.

In Figure \ref{fig:norm_FE_comp}~C, we apply the self-consistent iteration.  For this calculation, we estimate the number of independent samples in each window ($n_i$) from the integrated autocorrelation time of the $\phi$ dihedral angle time series.  We plot the standard deviation of the values of $z$ calculated after the first iteration (EMUS), the second iteration, and after convergence to a relative residual smaller than $10^{-6}$.  In general, convergence is achieved after an average of 9 iterations; none of the 100 data sets required more than 15 iterations.   However, we note that  after two iterations, the estimates of $z$ already have a standard deviation equivalent to that of the WHAM algorithm.

\section{Justification for umbrella sampling by scaling arguments}\label{sec:scaling}

The quality of a statistical estimate from umbrella sampling depends strongly on the choices made for the simulation windows.
In this section, we discuss how the error scales as properties of the simulation change.  We begin in Subsection \ref{sec:flat_scaling} with a description of a prevalent justification for the use of US.  We show in Subsection \ref{sec:flat_scaling2} that this argument is incomplete and, in turn, misleading.   In Subsection \ref{sec:low_T_limit}, we provide an alternative justification; namely, we show that in the low temperature limit, the cost to achieve a fixed accuracy by US grows slowly compared to direct simulation.  In this Section we make several simplifying assumptions that allow us to draw precise conclusions about the scaling properties of EMUS.  In Section \ref{sec:analysis} we provide error bounds for EMUS under much more general assumptions.

\subsection{Scaling in the Limit of Many Windows}\label{sec:flat_scaling}

To justify umbrella sampling, it is often suggested that the total computational time required to accurately sample statistics is inversely proportional to the number of windows, $L.$\cite{chandler1987introduction,frenkel2001understanding,chipot2007free,van1992computer,comer2014adaptive}  The argument for this scaling proceeds as follows.
\begin{itemize}
\item
Divide a one-dimensional collective variable space into $L$ windows of equal length,
inversely proportional to $L$ (i.e., $L^{-1}$).
\item \label{diffusive}
Assume the windows are small enough that
no free energy barriers exist in each window.
The time to explore a window should be diffusion limited and
proportional to the length of the window squared.
Therefore, the simulation time required to accurately sample statistics
in one window  is also proportional to $L^{-2}$.
\item Because there are $L$ windows,
the total simulation time required to compute averages to fixed accuracy should scale as $L \times L^{-2} = L^{-1}$.
\end{itemize}
While this argument is now standard,\cite{chandler1987introduction,frenkel2001understanding,chipot2007free,van1992computer,comer2014adaptive} Virnau and M\"{u}ller \cite{virnau2004calculation} observed that the error for computing the free energy difference between phases of Lennard-Jones particles with an approximately fixed amount of sampling was insensitive to the number of windows in practice, and they noted that the argument above neglects the error associated with combining the data from different simulation windows.
This intuition is supported by our analysis in the next subsection, which shows that the total computational cost to achieve a fixed accuracy should be insensitive to the choice of $L$, so long as it is sufficiently large.

\subsection{A Simple Model Problem}\label{sec:flat_scaling2}
To perform a more precise analysis, we make a number of simplifying assumptions.  We emphasize that these assumptions are in force only for the purposes of the scaling arguments in this section.  We provide more general error bounds for EMUS in Section \ref{sec:analysis}. 

\begin{assumption}[]
The total computation time, $N$, is divided equally among the windows such that $N_i = N/L$.
\end{assumption}

\begin{assumption}[]\label{asm:tridiag}
The $\psi_i$ are functions on the one-dimensional interval $[0,1]$, and  the set of points where $\psi_i$ is non-zero, $\{q:\, \psi_i>0\},$ is an interval of length $\lvert\{q:\, \psi_i>0\}\rvert \leq  \gamma/L.$ We also assume that $\psi_i \psi_j = 0$  unless $|j-i|<0$.    Consequently, both the exact matrix $F$ and the sample mean $\bar F$ are tri-diagonal.
{\rm This assumption clearly does not hold when the bias functions $\psi_i$ are Gaussian.  Nonetheless, the rapid decay of Gaussian bias functions away from their peaks guarantees that entries of $F$ and $\bar F$ far from the diagonal are very small, such that we expect our conclusions to still hold (though their justification would be more complicated).}
\end{assumption}

\begin{assumption}[]\label{asm:tridiagirr}
The overlap of $\psi_i$ and $\psi_{i\pm 1}$  (i.e., the integral of their product) is large enough that
\begin{equation}\label{eq:smoothness}
 \min\{ F_{i,i+1}, F_{i,i-1}\}>  \delta > 0
\end{equation}
for all $L$ and for all $i\leq L$.
{\rm If our last assumption holds, but this one does not, then  we can find more than one vector $z$ satisfying equations \eqref{eq:eval_for_z} and \eqref{eq:normalization of z}.  This assumption is a slightly stronger version of the notion of irreducibility that we defined earlier (see \ Section \ref{sec:eigenproblem}).  Note that we require the irreducibility to hold uniformly in the large $L$ limit, and we thus introduce the $\delta$, which is independent of $L$.}
\end{assumption}

\begin{assumption}[]
Sample averages computed in \emph{different} windows are independent, i.e., $\bar F_{i,i\pm 1}$ and $\bar F_{j,j\pm1}$ for $j\neq i$ are independent.
{\rm We do \emph{not} assume (here or anywhere else in this paper) that samples generated within a single window are independent. 
Indeed, even if the samples from $\pi_i$ are independent, 
$\bar F_{i,i+1}$ and $\bar F_{i,i-1}$ are dependent random variables.}
\end{assumption}

As an example average, let us consider the error in the free energy difference between the first and last windows:
\begin{equation}\label{eq:1D_FE_diff}
\Delta \bar{ G}_{L,1} = -k_BT\ln \left(\frac{ {z}_L}{ {z}_1}\right).
\end{equation}
Assumption~\ref{asm:tridiag} is sufficient for $z(\bar F)$ to be in detailed balance with $\bar{F}$
(Kelly\cite{kelly2011reversibility}, Lemma~1.5 and Section~1.3):
\begin{equation}\label{eq:recursive_form}
{z}_{i+1}(\bar F) =  {z}_{i}(\bar F)\frac{\bar{F}_{i,i+1}}{\bar{F}_{i+1,i}}.
\end{equation}
Using \eqref{eq:recursive_form} recursively,
\begin{align}
\Delta \bar{G}_{L,1} &= -k_BT\ln \left(\frac{\prod_{i=1}^{L-1}\bar{F}_{i,i+1}}{\prod_{i=1}^{L-1} \bar{F}_{i+1,i}}\right)\nonumber \\
&= -k_BT\ln \bar{F}_{1,2} + k_BT\ln \bar{F}_{L,L-1} \nonumber\\
&\quad\quad + k_BT\sum_{i=2}^{L-1} \ln \left ( \frac{\bar{F}_{i,i+1}}{\bar{F}_{i,i-1}} \right ). \label{eq:FE_breakdown}
\end{align}
To understand the error (variance) of the terms in \eqref{eq:FE_breakdown}, we must further specify $F_{i,i+1}$ and $F_{i,i-1}$.

\begin{assumption}[]\label{asm:difflim}
For $N_{min}$ and $L_{min}$ sufficiently large, when $N_i \geq N_{min}$ and $L\geq L_{min}$,
\begin{equation}\label{difflim}
\frac{K_{min}}{N_i L^2} \leq  \var \left( \ln \left(\frac{ \bar F_{i,i+ 1}}{\bar F_{i,i-1}}\right)\right) \leq  \frac{K_{max}}{N_i L^2}
\end{equation}
for $i=2,3,\dots, L-1$, and the same upper and lower bounds hold for $\var\left( \ln  F_{1,2}\right)$ and $\var\left( \ln F_{L,L-1}\right)$. {\rm
This is just a precise interpretation of the diffusion limited sampling assumption made in the standard justification of US reproduced in the last subsection.  
Under such an assumption we expect both $\bar F_{i,i+1}$ and $\bar F_{i,i-1}$ to have variance on the order of $1/(N_i L^2)$ and, in light of \eqref{eq:smoothness}, the function $\ln (x/y)$ is smooth near $(x,y) = (F_{i,i+1}, F_{i,i-1})$.   These considerations are closely related to Lemma \ref{lem:delta method} in the next section.}
\end{assumption}


With all the assumptions in hand, we now complete the argument by taking the variance of both sides of \eqref{eq:FE_breakdown}.
Since samples from different windows are independent,
the variance of $\bar{G}_{L,1}$ is a sum of contributions from each window:
\begin{equation}\label{eq:FE_var}
\begin{split}
\var \left( \Delta \bar{G}_{L,1}\right) &= \var \left(k_BT \ln \bar{F}_{1,2} \right) + \var \left(k_BT \ln\bar{F}_{L,L-1} \right) \\
& \qquad +\sum_{i=2}^{L-1} \var \left( k_BT\ln \left ( \frac{\bar{F}_{i,i+1}}{\bar{F}_{i,i-1}} \right ) \right).
\end{split}\
\end{equation}
Using \eqref{difflim} and substituting $N_i=N/L$, we find that, as  long as
$N/L\geq N_{min}$ and $L\geq L_{min},$
\begin{equation}\label{flaterrscale}
(k_B T)^2 \frac{K_{min}}{N}\leq \var \left( \Delta \bar{G}_{L,1}\right) \leq   (k_B T)^2 \frac{K_{max}}{N}.
\end{equation}
We thus see that the variance is independent of $L$.

To verify that this conclusion carries over to harmonic bias potentials, we performed multiple umbrella sampling calculations for a Brownian particle on a flat
potential on the interval $\left[0,1\right]$ with a stepsize of $1.0\times 10^{-6}$ and $k_B T=1.0$ using Gaussian bias functions with a standard deviation of $1/L$.  The number of windows was varied from $L=10$ to 46 in steps of 2.  For each value, a total of $10^7$ steps were distributed equally in the windows and the US calculation was repeated 480 times.  We then calculated the mean square error of the free energy difference between the first and last window over the 480 replicates and determined how the mean square error scaled with $L$.  An $L^{-1}$ scaling would predict that mean square error would decrease inversely with the number of windows used.  By contrast,
the data plotted in Figure \ref{fig:flatscaling} support a scaling of $L^0$, consistent with \eqref{flaterrscale}.

It is worth noting that the inverse scaling with total cost $N$ in \eqref{flaterrscale} is exactly the scaling one would expect for the variance of an estimate of the free energy difference $\Delta \bar{G}_{L,1}$ constructed from a molecular dynamics trajectory of length $N$.  Because US and direct simulations of comparable total numbers of steps require comparable computational effort (ignoring the overhead associated with combining the simulation data, which is typically small in comparison with the computational cost of the sampling), the benefits of US must be encoded in the constants $K_{min}$ and $K_{max}.$   A dramatic demonstration of this observation is the purpose of the next subsection.

\begin{figure}
  \includegraphics[scale=1.0]{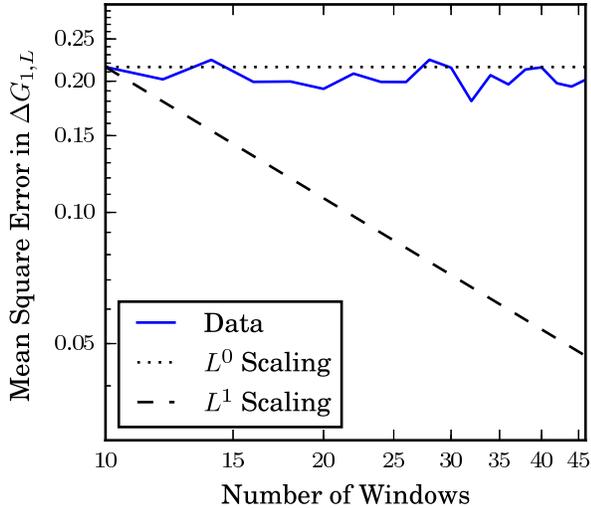}
  \caption{The scaling of umbrella sampling error with number of windows on a flat potential.  A Brownian particle on a flat, one-dimensional potential was simulated for 480 identical runs, and the free energy difference between the first and last windows was calculated, as described in the text.  Here, the mean square error from the exact result is plotted against the number of windows.   The lines show the scaling in error predicted by the $L^{-1}$ and $L^0$ scalings.  Fitting the data on a log-log scale gives a scaling exponent of $-0.026\pm0.028$.}
  \label{fig:flatscaling}
\end{figure}

\subsection{The Low Temperature Limit}\label{sec:low_T_limit}
To understand the benefits of umbrella sampling,
we must study its performance when free energy barriers are large in comparison with
the temperature, i.e., the low temperature limit ($k_BT\rightarrow 0$).
In this limit, the cost of direct sampling increases exponentially with $1/T$, while, as we show, the cost of umbrella sampling increases only algebraically.
A formal discussion is given in a separate publication;\cite{VanKotenArxiv}
here, we present a simple plausibility argument.

Owing to the free energy barriers, the assumption of diffusive dynamics in each window no longer holds.
Instead, we expect a form typical of reaction rate theories in each window.
We define $\Delta W_i$ as the maximum difference in the PMF in window $i$:
\begin{equation}\label{eq:defn_Delta_H_i}
\Delta W_i = \max_{\{q:\, \psi_i(q)>0\}} \left\{W\left(q\right)\right\} - \min_{\{q:\, \psi_i(q)>0\}} \left\{ W\left(q\right)\right\}.
\end{equation}

\noindent {\bf  Assumption \ref{asm:difflim}$'$.}
{\it We now replace the upper and lower bounds in \eqref{difflim} by the upper bound
\begin{equation}\label{arrhenius_kinnetics}
\var \left( \ln \left(\frac{ \bar F_{i,i+ 1}}{\bar F_{i,i-1}}\right)\right) \leq  \frac{K}{k_BTN_i L^2} \exp\left({\frac{\Delta W_i }{k_B T}}\right)
\end{equation}
for $i=2,3,\dots,L-1$ with analogous replacements for $i=1$ and $i=L$, as long as
$N_i\geq N_{min}$ and $L\geq L_{min}.$   The constant $K$ here is assumed to be independent of temperature.}
This bound captures the diffusion limited sampling assumption when $L$ is very large, but is more detailed than \eqref{difflim} in that it captures (crudely) the increasing difficulty of the sampling problem as the temperature decreases with all other parameters held fixed.   Under reasonable additional assumptions on the underlying potential, the bias functions $\psi_i,$  and the sampling scheme, one can rigorously establish an asymptotic (large $N_i$) bound of the form in \eqref{arrhenius_kinnetics}.\cite{VanKotenArxiv}

Substituting this new bound into \eqref{eq:FE_var}, we find that, if $L\geq L_{min}$ and $N/L\geq N_{min},$  then
\begin{equation}\label{dGbnd1}
\var \left( \Delta \bar{G}_{L,1} \right) \leq  \frac{K k_B T}{NL} \sum_{i=1}^L   \exp \left(\frac{\Delta W_i}{k_B T}\right).
\end{equation}
As the temperature decreases, we choose to increase $L$ such that $\Delta W_i / k_B T$ is bounded above.
This can be achieved by scaling $L$ linearly with $1/T$: if the derivative of the PMF is bounded (in absolute value) by $W_{max}'$, choosing $L$ so that
\begin{equation}\label{Lbigenough}
L \geq \frac{W_{max}' }{\Omega k_B T}
\end{equation}
ensures that $\Delta W_i / k_B T$ is bounded by $\Omega$ (since we have assumed that the argument of $W$ is in $[0,1]$).
On the other hand, our assumption that the length of $\{q:\, \psi_i>0\}$ (Assumption \ref{asm:tridiag})  does not exceed $\gamma/L$ implies that \begin{equation}
\frac{\Delta W_i}{k_B T} \leq \frac{W_{max}' \gamma }{k_B T L}.
\end{equation}
Consequently, as long as \eqref{Lbigenough} holds,
\[
\frac{\Delta W_i}{k_B T} \leq \Omega\gamma.
\]
Finally, substituting this result into \eqref{dGbnd1} we find that  if $L\geq L_{min}$ and $N/L\geq N_{min},$ then
\begin{equation*}
\var \left( \Delta \bar{G}_{L,1} \right) \leq \frac{ K k_B T\exp(\Omega\gamma)}{N}
\leq \frac{ K L k_B T\exp(\Omega\gamma)}{N_{min}}.
\end{equation*}
With the best possible (smallest) choice of $L$ allowed by \eqref{Lbigenough}, this bound becomes
\begin{equation}\label{eq:lowTresult}
\var \left( \Delta \bar{G}_{L,1} \right) \leq \frac{ K W'_{max} \exp(\Omega\gamma)}{\Omega N_{min}}.
\end{equation}
The remarkable feature of the bound in \eqref{eq:lowTresult} is that it is independent of $T$.
This does not mean that the cost to achieve a fixed accuracy is independent of $T.$    However, it does imply that as the temperature is decreased we do not have to increase $N_{min}$ to maintain a fixed accuracy.   Expression \eqref{Lbigenough} and the fact that $N_i \geq N_{min}$  together imply that the computational cost of obtaining an accurate estimate of $\Delta \bar{G}_{L,1}$ by US increases algebraically with $(k_B T)^{-1}$.  That scaling is to be compared to exponential in $(k_B T)^{-1}$ to achieve the same accuracy by direct simulation.


\section{Analysis of the Error of EMUS}\label{sec:analysis}

In this section, we study the error of EMUS in full generality,
without imposing the simplifying assumptions of the previous section.
Our main results are a central limit theorem for EMUS (Theorem \ref{thm:clt for emus} below)
and an easily computed, practical error estimator which reveals the contributions
of the different windows to the total error.
These results may be used to compare the efficiency of EMUS and other methods
and to study how the efficiency of EMUS depends on parameters such as the
number of samples allocated to each window.

\subsection{A Central Limit Theorem for EMUS}\label{sec:clt}

Before developing the error analysis,
we define a single notation for EMUS which incorporates
both the case of a free-energy difference and the case of an ensemble average.
In either case, one must compute $\bar F$
and also $\bar g_{1,i}^\ast$ and $\bar g_{2,i}^\ast$
for two real valued functions $g_1$ and $g_2$.
To compute a free energy difference, we choose based on \eqref{eq:defn_FE}
\begin{equation*}
  g_1 = \1_{S1} \text{ and } g_2 = \1_{S2}.
\end{equation*}
To compute an ensemble average $\langle g\rangle$, we choose based on \eqref{eq:combined_avg}
\begin{equation*}
  g_1 = g \text{ and } g_2 = 1.
\end{equation*}
We furthermore define the function
\begin{equation}\label{eq:defn_chi_fxn}
  \avevec_i(x) = \left (\psi_1^\ast(x), \dots, \psi_{L}^\ast(x), g_{1,i}^\ast(x), g_{2,i}^\ast(x) \right )
\end{equation}
so that
\begin{equation*}
  \bar \avevec_i = \frac{1}{N_i} \sum_{t=0}^{N_i-1} \avevec_i(X^i_t)
  = \left(\bar{F}_{i1},\dots, \bar{F}_{iL}, \bar g_{1,i}^\ast, \bar g_{2,i}^\ast \right),
\end{equation*}
where we remind the reader that, for each $i,$ the process $X^i_t$ samples the biased distribution $\pi_i$.
Define
\begin{equation*}
  \bar \avevec = (\bar \avevec_1, \dots, \bar \avevec_L),
\end{equation*}
and let
\begin{equation*}
 \<\avevec\> = \left(\<\avevec_1\>_1, \dots, \<\avevec_L\>_L\right)
\end{equation*}
denote the corresponding vector of exact averages.
The EMUS estimator takes the form $B(\bar \avevec)$,
where for a free-energy difference,
\begin{equation}\label{eq: B for a free energy difference}
  B(\bar \avevec) = -kT \log \left (
  \frac{\sum_{i=1}^L z_i(\bar F) \bar g_{1,i}^\ast}{\sum_{i=1}^L z_i(\bar F) \bar g_{2,i}^\ast}
  \right ),
\end{equation}
and for an ensemble average,
\begin{equation}\label{eq: B for an ensemble average}
  B(\bar \avevec) = \frac{\sum_{i=1}^L z_i(\bar F) \bar g_{1,i}^\ast}{\sum_{i=1}^L z_i(\bar F) \bar g_{2,i}^\ast}.
\end{equation}

We now proceed with the error analysis.
First, we characterize the error of the sample means
over the biased distributions.
As discussed by Frenkel and Smit~\cite[Appendix~D]{frenkel2001understanding},
the variance of a sample mean may be expanded in terms of the integrated
autocovariance of the process.
We define the autocovariance function of $\avevec_i(X^i_t)$ to be
\begin{equation*}
  C_i(t) = \left\langle (\avevec_i(X^i_0) - \<\avevec_i\>_i)(\avevec_i(X^i_t) - \<\avevec_i\>_i)^\t\right\rangle_i,
\end{equation*}
where $\t$ denotes a vector transpose, and here the outer $\langle\ldots\rangle_i$ denotes the exact average not only over $X^i_0$ sampled from $\pi_i$ but also subsequent points of the sequence $X^i_t$.
Note that $C_i(t)$ is a $(L+2) \times (L+2)$  matrix.
We define the integrated autocovariance to be
\begin{equation}\label{eq:integrated autocovariance}
  \Sigma_i = \sum_{t=-\infty}^\infty C_i(t).
\end{equation}
The integrated autocovariance is the leading
order coefficient in an expansion of the covariance $\bar \avevec_i$ (see Frenkel and Smit\cite[D.1.3]{frenkel2001understanding}):
\begin{equation}\label{eq: expansion of covariance}
  \cov(\bar \avevec_i) = \frac{\Sigma_i}{N_i} + o \left ( \frac{1}{N_i} \right ),
\end{equation}
where $o(1/N_i)$ denotes terms that go to zero faster than $N_i$ (i.e., $N_io(1/N_i)\rightarrow 0$).

Under certain conditions on the process $X^i_t$,
one can strengthen the expansion of the covariance \eqref{eq: expansion of covariance} to a central limit theorem (CLT) for $\bar \avevec_i$.
We expect such a CLT to hold for most problems and most sampling methods
in computational statistical physics.
However, to avoid a lengthy and technical digression, we simply take the CLT as an assumption;
we justify this assumption in more detail in another work,\cite{VanKotenArxiv}
and we refer to Leli\`{e}vre {\it et al.}\cite[Section~2.3.1.2]{LeLievre:FreeEnergyComp}
for a general discussion of the CLT
in the context of computational statistical physics.

\begin{assumption}[Central Limit Theorem for $\bar \avevec_i$]\label{asm:clt for sample means}
We assume that
\begin{equation}\label{eq:assumption CLT for window averages}
\sqrt{N_i}\left(\bar \avevec_i -\<\avevec_i\> \right) \convdist \N \left(0,\Sigma_i\right)
\end{equation}
where ${\Sigma_i \in \Real^{(L+2)\times(L+2)}}$ is the integrated autocovariance matrix
defined in~\eqref{eq:integrated autocovariance}.
The symbol $\convdist$ denotes convergence in distribution as $N_i \rightarrow \infty$.
{\rm Notice that when the elements of the sequence $X_t^i$ are independent and drawn from $\pi_i$ then
$\Sigma_i =  \left\langle (\avevec_i - \<\avevec_i\>_i)(\avevec_i - \<\avevec_i\>_i)^\t\right\rangle_i/N_i$.  More generally, samples are correlated, so $\Sigma_i$ includes a factor that accounts for the time to decorrelate.}
\end{assumption}

Having characterized the errors in the sample means, we now study how these errors
propagate through the EMUS algorithm.
Our goal is to prove a CLT for EMUS.
We accomplish this using the delta method.

\begin{lemma}[The Delta Method; Proposition~6.2 of Bilodeau and Brenner~\cite{bilodeau2008theory}]\label{lem:delta method}
Let $\theta_N$ be a sequence of random variables taking values in $\Real^d$.
Assume that a central limit theorem holds for $\theta_N$
with mean $\mu \in \Real^d$ and asymptotic covariance matrix $\Sigma \in \Real^{d \times d}$;
that is, assume
\begin{equation*}
  \sqrt{N} (\theta_N - \mu) \convdist \N(0, \Sigma).
\end{equation*}
Let $\Phi:\Real^d \to \Real$ be a function differentiable at $\mu$.
Then we have the central limit theorem
\begin{equation*}
\sqrt{N} \left(\Phi \left(\theta_N\right) - \Phi \left(\mu\right) \right) \convdist \N \left(0,\nabla \Phi^\t \left(\mu\right)  \Sigma \nabla \Phi \left(\mu\right) \right)
\end{equation*}
for the sequence of random variables $\Phi(\theta_N)$.
\end{lemma}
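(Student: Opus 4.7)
The plan is to prove the delta method by the classical two-step route: expand $\Phi$ to first order around $\mu$ using its differentiability, then show that the linear term carries the asymptotic Gaussian distribution while the remainder is asymptotically negligible. Slutsky's theorem combines the two.

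The first step is to rewrite differentiability of $\Phi$ at $\mu$ as $\Phi(\mu + h) = \Phi(\mu) + \nabla\Phi(\mu)^\t h + R(h)$, with $R(h)/\|h\| \to 0$ as $h \to 0$. Setting $h_N = \theta_N - \mu$, this gives
\[
\sqrt{N}\bigl(\Phi(\theta_N) - \Phi(\mu)\bigr) = \nabla\Phi(\mu)^\t \sqrt{N}\, h_N + \sqrt{N}\, R(h_N).
\]
The CLT hypothesis together with the continuous mapping theorem applied to the linear (hence continuous) map $v \mapsto \nabla\Phi(\mu)^\t v$ immediately yields
\[
\nabla\Phi(\mu)^\t \sqrt{N}\, h_N \convdist \N\bigl(0,\, \nabla\Phi(\mu)^\t \Sigma \nabla\Phi(\mu)\bigr).
\]

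The core step is to show $\sqrt{N}\, R(h_N) \to 0$ in probability. I would factor $\sqrt{N}\, R(h_N) = \|\sqrt{N}\, h_N\|\cdot R(h_N)/\|h_N\|$ on the event $\{h_N \neq 0\}$. The first factor is tight because $\sqrt{N} h_N$ converges in distribution, and every weakly convergent sequence of real random vectors is tight. The CLT hypothesis also implies $h_N = N^{-1/2} \cdot O_p(1) \to 0$ in probability, so applying the continuous mapping theorem to the function $h \mapsto R(h)/\|h\|$ (extended by $0$ at $h=0$, which is continuous there by the definition of differentiability) gives $R(h_N)/\|h_N\| \to 0$ in probability. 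The product of a tight sequence and an $o_p(1)$ sequence is $o_p(1)$, so $\sqrt{N}\, R(h_N) \to 0$ in probability, and Slutsky's theorem finishes the proof.

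The main obstacle I anticipate is the remainder term, precisely because the hypothesis is only pointwise differentiability at the single point $\mu$: one has no uniform Taylor estimate on a neighborhood to fall back on. The two-factor decomposition above is the standard trick — tightness of the normalized step controls the magnitude, while the pointwise differentiability controls the ratio $R(h)/\|h\|$ — but it is also the step where one has to be careful in a write-up (defining $R(h)/\|h\|$ at $h = 0$ and arguing that the event $\{h_N = 0\}$ does not cause trouble). Everything else is a routine application of the continuous mapping theorem and Slutsky's theorem.
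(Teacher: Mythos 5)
Your proof is correct: the first-order expansion, the continuous mapping theorem for the linear term, and the tightness-times-$o_p(1)$ argument for the remainder (with the convention $R(0)/\|0\|:=0$ and continuity of $h\mapsto R(h)/\|h\|$ at $0$) is the standard and complete proof of the delta method. Note that the paper does not prove this lemma at all---it is quoted from Bilodeau and Brenner with only an informal linearization remark as motivation---so your argument simply supplies the standard textbook proof that the citation points to.
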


To motivate the delta method, we observe that if $X$ has distribution $\N(\mu, \Sigma)$,
then $\nabla \Phi(\mu)^\t X$ has distribution
$\N \left(\Phi(\mu),\nabla \Phi^\t \left(\mu\right) \Sigma \nabla \Phi \left(\mu\right) \right)$.
That is, according to the delta method, the asymptotic distribution of $\Phi(X)$
is the linearization of $\Phi$ at $\mu$
applied to the asymptotic distribution of $X$.
Thus, one may regard the delta method as a rigorous version of
the standard error propagation formula based on linearization.

We prove the CLT for EMUS by applying the delta method
with $\bar \avevec$ taking the place of $\theta_N$ and
with the function $B$ taking the place of $\Phi$.
We require the following assumptions in addition to
Assumption~\ref{asm:clt for sample means}.
\begin{assumption}\label{asm: assumptions in addition to CLT for smpl means}
We assume:
\begin{enumerate}
\item
The proportion of the total number of samples drawn from each window is constant
in the limit as $N \rightarrow \infty$; that is,
\begin{equation}\label{eq:assumption of fixed proportions of samples}
  \lim_{N \rightarrow \infty} N_i/N = \kappa_i.
\end{equation}
\item Sampling in different windows is independent; that is, $\bar \avevec_i$ is independent of $\bar \avevec_j$ when $j \neq i$.
\item The biasing functions $\psi_i$ are chosen so that $F$ is irreducible; see Section~\ref{sec:eigenproblem}.
\end{enumerate}
\end{assumption}

We now give the CLT for EMUS.

\begin{theorem}[Central Limit Theorem for EMUS]\label{thm:clt for emus}
Let Assumptions~\ref{asm:clt for sample means}
and~\ref{asm: assumptions in addition to CLT for smpl means} hold.
Let
\begin{equation*}
\frac{\partial B}{\partial \bar \avevec_i} = \left ( \frac{\partial B}{\partial \bar F_{i1}},
 \dots, \frac{\partial B}{\partial \bar F_{iL}}, \frac{\partial B}{\partial \bar g^\ast_{1,i}},
 \frac{\partial B}{\partial \bar g^\ast_{2,i}}  \right ) \in \Real^{L+2}
\end{equation*}
denote the partial derivative of $B$ with respect to $\bar \avevec_i$,
evaluated at $\avevec$.
Under the assumptions stated above,
\begin{equation*}
  \sqrt{N} \left (B\left(\bar \avevec\right) - B\left(\<\avevec\>\right)\right )
  \convdist \N(0, \sigma^2),
\end{equation*}
where
\begin{equation}\label{eq:formula for asymptotic variance}
  \sigma^2 = \sum_{i=1}^L \frac{1}{\kappa_i} \left(\frac{\partial B}{\partial \avevec_i}^\t  \Sigma_i \frac{\partial B}{\partial \avevec_i}\right).
\end{equation}
We refer to $\sigma^2$ as the asymptotic variance of EMUS.
\end{theorem}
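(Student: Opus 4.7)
The plan is to obtain the central limit theorem for $B(\bar\avevec)$ by applying the Delta Method (Lemma~\ref{lem:delta method}) with $\theta_N = \bar\avevec$ and $\Phi = B$. Two ingredients are required: (a) a joint CLT for $\bar\avevec$ at rate $\sqrt{N}$, and (b) differentiability of $B$ at $\<\avevec\>$.

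For (a), I would combine the per-window CLTs of Assumption~\ref{asm:clt for sample means} with the independence and allocation hypotheses of Assumption~\ref{asm: assumptions in addition to CLT for smpl means}. Writing $\sqrt{N}(\bar\avevec_i-\<\avevec_i\>) = \sqrt{N/N_i}\,\sqrt{N_i}(\bar\avevec_i-\<\avevec_i\>)$ and using $N_i/N \to \kappa_i$ together with Slutsky's theorem, each rescaled block converges in distribution to $\N(0,\Sigma_i/\kappa_i)$. Because sampling in distinct windows is independent, the joint limit is block diagonal:
\[
\sqrt{N}\bigl(\bar\avevec - \<\avevec\>\bigr) \convdist \N(0,\Sigma), \qquad \Sigma = \mathrm{diag}\!\left(\Sigma_1/\kappa_1,\,\ldots,\,\Sigma_L/\kappa_L\right).
\]

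For (b), the nontrivial piece is the smooth dependence of $z(\bar F)$ on the entries of $\bar F$ near $F$. Irreducibility of $F$ (Assumption~\ref{asm: assumptions in addition to CLT for smpl means}(3)) together with Perron--Frobenius, as already invoked in Section~\ref{sec:eigenproblem}, ensures that the eigenvalue one of $F$ is simple. Standard analytic perturbation theory for simple eigenvalues then gives that the corresponding normalized left eigenvector is an analytic function of the matrix entries in a neighborhood of $F$; hence $z(\bar F)$ is smooth at $F$. Since $B$ is built from $z(\bar F)$ and the $\bar g^\ast_{\ell,i}$ by rational combinations (and, in the free-energy-difference case, a logarithm), $B$ is differentiable at $\<\avevec\>$ provided the denominator $\sum_i z_i\<g_{2,i}^\ast\>_i$ is nonzero. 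In the ensemble-average case this follows directly from \eqref{eqn:1star}; in the free-energy-difference case it is the positivity of $\<\1_{S_2}\>$, which we may take as a standing requirement for the quantity $\Delta G$ in \eqref{eq:defn_FE} to be defined.

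Combining (a) and (b), Lemma~\ref{lem:delta method} yields $\sqrt{N}(B(\bar\avevec)-B(\<\avevec\>)) \convdist \N(0, \nabla B^{\t}\Sigma\,\nabla B)$. Expanding the quadratic form using the block-diagonal structure of $\Sigma$ and grouping the contribution of each window gives
\[
\nabla B^{\t}\Sigma\,\nabla B \;=\; \sum_{i=1}^L \frac{1}{\kappa_i}\,\frac{\partial B}{\partial \avevec_i}^{\!\t}\Sigma_i\,\frac{\partial B}{\partial \avevec_i},
\]
which is precisely \eqref{eq:formula for asymptotic variance}. I expect the main obstacle to be the clean statement and verification of the perturbation-theoretic smoothness of $z(\bar F)$ together with identifying exactly which non-vanishing denominator hypothesis is needed for $B$; once these are established, the rest of the argument is routine bookkeeping via Slutsky's theorem and the Delta Method.
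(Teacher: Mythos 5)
Your proposal is correct and follows essentially the same route as the paper's proof: a joint block-diagonal CLT for $\bar \avevec$ from the per-window CLTs, independence, and Slutsky's theorem; differentiability of $B$ at $\<\avevec\>$ via the smoothness of $z(\bar F)$ for irreducible stochastic matrices (the paper cites Thiede \emph{et al.} Lemma~3.1 for this step, where you invoke simple-eigenvalue perturbation theory directly); and then the delta method of Lemma~\ref{lem:delta method}. Your explicit attention to the non-vanishing denominator is a small additional care the paper leaves implicit, but the argument is the same.
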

\begin{proof}
First, we write down a central limit theorem for
$(\bar \avevec_1, \dots, \bar \avevec_L)$.
We have that
\begin{equation}\label{eq:adjusted clt for windows}
  \sqrt{N} (\bar \avevec_i - \<\avevec_i\>_i ) \convdist \N (0, \kappa_i^{-1} \Sigma_i)
\end{equation}
by Assumption~\ref{asm:clt for sample means}
and~\eqref{eq:assumption of fixed proportions of samples}.
Since the sampling in different windows is assumed to be independent,
\eqref{eq:adjusted clt for windows} implies
\begin{equation*}
  \sqrt{N} (\bar \avevec  - \<\avevec\>) \convdist \N (0, \Sigma),
\end{equation*}
where $\Sigma \in \Real^{L(L+2) \times L(L+2)}$ is the block diagonal matrix
\begin{equation*}
\Sigma =\left[
\begin{array}{c|c|c|c}
\Sigma_1 / \kappa_1 & 0 & 0 &  \ldots \\ \hline
0 & \Sigma_2 / \kappa_2 & 0 &  \ldots \\ \hline
0 & 0 & \Sigma_3 / \kappa_3 & \ldots \\ \hline
\vdots & \vdots & \vdots & \ddots \\
\end{array}
\right].
\end{equation*}

Second, we verify that $B$ is differentiable at $\bar{\avevec}$.
Since $F$ is assumed to be an irreducible stochastic matrix,
$z(\bar F)$ is differentiable at $\bar{F}$.
We refer to  Thiede {\it et al.}\cite{perturbationArticle} Lemma~3.1  for a complete explanation.
It follows from the chain rule that $B$ is differentiable at $\bar{\avevec}$.

Finally, applying Lemma~\ref{lem:delta method} with $B$ playing the role of $\Phi$
and $\bar \avevec$ the role of $\theta_N$ concludes the proof.
\end{proof}

The asymptotic variance $\sigma^2$ appearing in Theorem~\ref{thm:clt for emus}
measures the rate at which the error of EMUS decreases with the number of samples.
To make this precise, we observe that Theorem~\ref{thm:clt for emus}
is equivalent to the following asymptotic result concerning confidence intervals.
For every $\alpha >0$,
\begin{equation}
  \lim_{N\rightarrow \infty} \P \left [
  \left \lvert B(\bar \avevec) - B(\<\avevec\>) \right \rvert  \leq \frac{\alpha \sigma}{\sqrt{N}}  \right ]
   = \erf \left ( \frac{\alpha}{\sqrt{2}} \right ),
\end{equation}
where $\bf P$ denotes a probability and $\erf$ denotes the error function.

The asymptotic variance is commonly used to measure the efficiency of an estimator.
We refer to van der Vaart~\cite{vanderVaart:AsymptStat}
for an explanation and for a discussion of other possibilities.
In Section~\ref{sec:estimate of asymptotic var},
we explain how the proportion $\kappa_i$ of samples allocated to each
window may be adjusted to minimize the asymptotic variance of EMUS,
thereby maximizing efficiency.

We note that a central limit theorem similar to Theorem~\ref{thm:clt for emus}
has been proved for the MBAR estimator by Gill {\it et al.}~\cite[Proposition~2.2]{gill1988}.
However, the authors of this work do not study the dependence of the asymptotic variance on the parameters, as we do.
In fact, the MBAR estimator is significantly more complicated than EMUS,
and its dependence on the number of windows and the allocation of samples is harder to understand.

Van Koten and Weare\cite{VanKotenArxiv} use a result similar to Theorem \ref{thm:clt for emus} to generalize the conclusions of Section \ref{sec:scaling} to periodic and multi-dimensional reaction coordinates and to a wider class of observables than free energy differences. We show both that the asymptotic variance is constant in the limit of large $L$ and that the work required to compute an average to fixed precision increases only algebraically in the low temperature limit. In addition, we use recently developed perturbation estimates for Markov chains \cite{perturbationArticle} to quantify the dependence of the asymptotic variance of EMUS on the degree to which the bias functions overlap.

\subsection{Estimating the Asymptotic Variance of EMUS}\label{sec:estimate of asymptotic var}

Our goal in this section is to derive a computable estimate $\bar{\sigma}^2$
of the asymptotic variance $\sigma^2$,
which can be decomposed to assess the contributions from individual windows to errors in averages.
We recall that formula~\eqref{eq:formula for asymptotic variance} for $\sigma^2$
involves partial derivatives of $B$.
Our estimate $\bar{\sigma}^2$ of $\sigma^2$ requires explicit formulas for these partial derivatives.
We provide the appropriate expressions, both for ensemble averages and for free-energy differences,
in Lemma \ref{lem:formulas for partials of emus}.
Following the partial derivatives,
we present an algorithm for evaluating $\bar{\sigma}^2$ and demonstrate it for the alanine dipeptide.
Finally, we compare with the output of a procedure
from Zhu and Hummer (ZH)\cite{zhu2012convergence} in Section \ref{secn:zhuhummer}.

\begin{lemma}\label{lem:formulas for partials of emus}

We have the following formulas for $\partial B/\partial \bar \avevec_i$:
\begin{enumerate}
\item When EMUS is used to compute an ensemble average $\langle f \rangle$,
$B$ is defined by~\eqref{eq: B for an ensemble average},
and we have
\begin{align*}
  \frac{\partial B}{\partial \bar F_{ij}} (\bar \avevec)
   &= \frac{\sum_{k} z_i(\bar F) (I-\bar F)^\#_{jk} ( \bar g^\ast_k - B(\bar \avevec) \bar 1^\ast_k )}{\sum_k z_k(\bar F) \bar 1^\ast_k}, \\
  \frac{\partial B}{\partial \bar g^\ast_{1,i}}(\bar \avevec) &=
  \frac{z_i(\bar F)}{\sum_k z_k(\bar F) \bar 1_k^\ast}, \text{ and } \\
  \frac{\partial B}{\partial \bar g^\ast_{2,i}}(\bar \avevec)
  &= - \frac{B(\bar \avevec) z_i(\bar F) }{\sum_k z_k(\bar F) \bar 1_k^\ast}.
\end{align*}
\item
When EMUS is used to compute a free-energy difference,
$B$ is defined by~\eqref{eq: B for a free energy difference},
and we have
\begin{align*}
  \frac{\partial B}{\partial \bar F_{ij}} (\bar \avevec) &=
  kT z_i(\bar F) \left (\frac{\sum_k (I-F)^\#_{jk} \1^\ast_{S2,k}}{\sum_k z_k(\bar F) \1^\ast_{S2,k}}
    \right . \\
  &\qquad \qquad \qquad \left . - \frac{\sum_k (I-F)^\#_{jk} \1^\ast_{S1,k}}{\sum_k z_k(\bar F) \1^\ast_{S1,k}} \right ), \\
  \frac{\partial B}{\partial \bar g^\ast_{1,i}} (\bar \avevec) &= kT \frac{z_i(\bar F)}{\sum_k z_k(\bar F) \bar \1_{S1,k}^\ast},
  \text{ and } \\
  \frac{\partial B}{\partial \bar g^\ast_{2,i}}  (\bar \avevec) &= -kT  \frac{z_i(\bar F)}{\sum_k z_k(\bar F) \bar \1_{S2,k}^\ast}.
\end{align*}
\item{
When EMUS is used to compute the free energy of a window, $B$ is defined by \eqref{eqn:windowfreeenergy}, and we have
\begin{align*}
\frac{\partial B}{\partial \bar F_{ij}}  (\bar \avevec)&=
  \frac{z_i(F)}{z_k(F)} (I-F)^\#_{jk}, \text{and}\\
\frac{\partial B}{\partial \bar g^\ast_{i,1}} (\bar \avevec)
&=\frac{\partial B}{\partial \bar g^\ast_{i,2}}  (\bar \avevec)= 0.
\end{align*}
Note that for a free energy difference between windows, we can simply subtract derivatives for the corresponding windows.}
\end{enumerate}
\end{lemma}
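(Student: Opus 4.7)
The plan is to apply the chain rule to the explicit formulas defining $B$ in each of the three cases, with all the work concentrated in computing $\partial z_k(\bar F)/\partial \bar F_{ij}$. Everything else in the lemma is elementary calculus applied to ratios, products, and logarithms of sums of the form $\sum_k z_k(\bar F) \bar h^*_k$. I would therefore isolate the one nontrivial ingredient first and then run the chain rule mechanically.

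The nontrivial ingredient is the perturbation formula for the Perron eigenvector of an irreducible stochastic matrix:
\begin{equation*}
\frac{\partial z_k(F)}{\partial F_{ij}} = z_i(F)\,(I-F)^{\#}_{jk},
\end{equation*}
where $(I-F)^{\#}$ is the group inverse of $I-F$. This identity is exactly the content of Lemma~3.1 of Thiede \textit{et al.}\cite{perturbationArticle}, which is referenced in the proof of Theorem \ref{thm:clt for emus}. In principle one derives it by implicitly differentiating the pair of defining relations $z(F) F = z(F)$ and $\sum_k z_k(F) = 1$ and using the fact that $(I-F)^\#$ inverts $I - F$ on the complement of the one-dimensional kernel; however, since this is already established in the cited work, I would simply invoke it and move on. This is also where the irreducibility assumption on $F$ from Assumption~\ref{asm: assumptions in addition to CLT for smpl means} enters, since irreducibility is what guarantees that the group inverse exists and that $z(F)$ is differentiable.

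Given the perturbation formula, I would handle the three cases as follows. For the ensemble average (case 1), write $B(\bar v) = N(\bar v)/D(\bar v)$ with $N = \sum_k z_k(\bar F)\bar g^*_{1,k}$ and $D = \sum_k z_k(\bar F)\bar 1^*_k$. The partials with respect to $\bar g^*_{1,i}$ and $\bar g^*_{2,i}$ follow immediately from $\partial N/\partial \bar g^*_{1,i} = z_i(\bar F)$ and the quotient rule, using $N/D = B(\bar v)$ to clean up the second one. For $\partial B/\partial \bar F_{ij}$, apply the perturbation formula to both $N$ and $D$ to get $\partial N/\partial \bar F_{ij} = z_i(\bar F)\sum_k (I-\bar F)^\#_{jk} \bar g^*_{1,k}$ and the analogous formula for $D$, then combine through the quotient rule, again replacing the resulting $N/D$ by $B(\bar v)$ to obtain the stated form. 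Case 2 (free-energy difference) reduces to case 1: differentiate $B = -kT(\log N - \log D)$ with $g_1 = \mathds{1}_{S1}$ and $g_2 = \mathds{1}_{S2}$, which just divides the numerator and denominator contributions by $N$ and $D$ respectively and introduces the factor $kT$ and sign in the stated formulas. Case 3 (window free energy) is the simplest: $B$ depends on $\bar v$ only through $z(\bar F)$, so the $\bar g^*$ partials vanish, and $\partial(\log z_k)/\partial \bar F_{ij} = (1/z_k)\partial z_k/\partial \bar F_{ij}$ combined with the perturbation formula gives the claimed expression.

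The only real obstacle is the perturbation formula, and it is already proved elsewhere. The remainder of the proof is bookkeeping with the chain rule, and I would write it out compactly by doing case 1 in detail and then obtaining cases 2 and 3 as corollaries by composition with $-kT\log$.
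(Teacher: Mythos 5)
Your proposal matches the paper's own proof: both isolate the eigenvector perturbation identity $\partial z_k/\partial \bar F_{ij} = z_i(\bar F)(I-\bar F)^\#_{jk}$ (obtained from the Golub--Meyer group-inverse formula, with differentiability justified via Thiede \emph{et al.}) as the single nontrivial ingredient, and then obtain all three cases by routine chain-rule and quotient-rule computations on the explicit expressions for $B$. The only cosmetic difference is which case is written out in detail (the paper does the free-energy difference, you do the ensemble average), so no further comparison is needed.
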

\begin{proof}
We begin by reminding the reader that the output of EMUS is the vector of window normalization constants, $z$, which depends on the sample mean $\bar{F}$.
Because all other averages and, in turn, their derivatives rely on $z$, we need to determine the sensitivity of each element of $z$ to each element of $\bar{F}$ (i.e., $\partial z_k/\partial \bar F_{ij}$).
Since $\bar F$ is a stochastic matrix, some care must be taken in defining this derivative.
We resolve the technical difficulties in detail elsewhere;
see Van Koten and Weare~\cite{VanKotenArxiv} and Thiede\textit{ et al.}~\cite[Lemma~3.1]{perturbationArticle}.
Here, to obtain the derivative
$\partial z_k/\partial \bar F_{ij}$, evaluated at $\bar F$, we perturb around $\bar F$:
\begin{equation}
  \left . \frac{d}{d\eps} \right |_{\eps =0} z_k(\bar F + \eps E)
  = \sum_{i,j=1}^L \frac{\partial z_k}{\partial \bar F_{ij}}(\bar F) E_{ij},
  \label{eq:defn of partials}
\end{equation}
where $E$ is an arbitrary matrix, $\eps$ is a scalar, and we assume that the sum $\bar F + \eps E$ is also a stochastic matrix.
The righthand side follows from the chain rule, effectively treating each element of the matrix as a separate argument to each element $z$.  Then, we employ a relation from Golub and Meyer\cite[Theorem~3.1]{golub1986using}:
\begin{equation}\label{eq:golubmayer}
\left . \frac{d}{d\eps} \right |_{\eps =0} z_k(\bar F + \eps E)  = z(\bar F)^\t E (I-\bar F)^\# e_k,
\end{equation}
where \# denotes the group inverse, a generalized matrix inverse similar to the Moore-Penrose inverse.  It is defined as satisfying $AA^\#A = A$, $A^\#AA^\# = A^\#$, $AA^\# = A^\#A$.
We refer to Golub and Meyer\ \cite{golub1986using} for further discussion of the group inverse
and an algorithm for computing it.
Finally, we equate \eqref{eq:defn of partials} and \eqref{eq:golubmayer} and solve for the derivative of interest:
\begin{equation}\label{eq:zsensitivity}
  \frac{\partial z_k}{\partial \bar F_{ij}}(\bar F) = z_i(\bar F) (I-\bar F)^\#_{jk}.
\end{equation}
Thus the sensitivity  of each element of $z$ to each element of $\bar{F}$ can be computed from linear algebra operations.

With \eqref{eq:zsensitivity}, we can now compute the derivatives of $B$.
We derive the formulas for the free-energy difference explicitly; the other cases are similar.  In this case,
\begin{equation*}
  B(\bar \avevec) =
   kT \log \left ( \sum_{k=1}^L z_k(\bar F) \bar g_{2,k}^\ast \right ) -
   kT \log \left (\sum_{k=1}^L z_k(\bar F) \bar g_{1,k}^\ast \right ).
\end{equation*}
By the chain rule,
\begin{align*}
  \frac{\partial}{\partial \bar g_{1,i}^\ast} \log \left (\sum_{k=1}^L z_k(\bar F) \bar g_{1,k}^\ast \right )
  &= \frac{z_i(\bar F)}{\sum_{k=1}^L z_k(\bar F) \bar g_{1,k}^\ast},
\end{align*}
and
\begin{align*}
  \frac{\partial}{\partial \bar F_{ij}} \log \left (\sum_{k=1}^L z_k(\bar F) \bar g_{1,k}^\ast \right )
  &=  \frac{\sum_{k=1}^L \frac{\partial z_k}{\partial \bar F_{ij}}(\bar F) \bar g_{1,k}^\ast}{\sum_{k=1}^L z_k(\bar F) \bar g_{1,k}^\ast}.
\end{align*}
The stated result follows by substituting $g_1 = \1_{S1}$, $g_2 = \1_{S2}$, and
the expression in \eqref{eq:zsensitivity} for $\partial z_k/\partial \bar F_{ij}$.
\end{proof}

\subsubsection{Computational Procedure}\label{secn:algorithm}

We now provide a practical procedure that uses the derivatives above to estimate $\sigma^2$  from
trajectories that sample the distributions $\pi_i$.
For clarity, we assume that the system is equilibrated (i.e., $X^i_0$ has distribution $\pi_i$,
so that the process $X^i_t$ is stationary) throughout this section.

We begin by rewriting \eqref{eq:formula for asymptotic variance} as
\begin{equation}\label{eq:asympt var emus in terms of M processes}
  \sigma^2 = \sum_{i=1}^L \frac{\chi_i^2}{\kappa_i}
\end{equation}
where
\begin{align*}
 \chi_i^2
&= \frac{\partial B}{\partial \avevec_i}^\t \Sigma_i
\frac{\partial B}{\partial \avevec_i} \\
&=
\frac{\partial B}{\partial \avevec_i}^\t \left (\sum_{t=-\infty}^\infty  C_i(t) \right)
\frac{\partial B}{\partial \avevec_i}.
 \end{align*}
 Defining the sequence
 \begin{equation}\label{eq:defn_EMUS_err_process}
  \zeta^i_t = \left.\frac{\partial B}{\partial \avevec_i}\right|_{\<\avevec\>}\cdot \left(\avevec_i(X^i_t) - \<\avevec_i\>_i\right)
\end{equation}
we find that
 \[
 \chi_i^2
 = \sum_{t=-\infty}^\infty \left\langle \zeta^i_t \zeta^i_0 \right\rangle
 \]
 which is the integrated autocovariance of $\zeta_t^i.$

We thus propose the following algorithm, given simulation data:
\begin{enumerate}
  \item Compute $\bar \avevec$.
  \item Compute $z(\bar F)$ and $(I-\bar F)^\#$ using the algorithm of Golub and Meyer.\cite{golub1986using}
  \item Evaluate $\partial B/\partial \avevec_i$  at $\bar \avevec$
  using the formulas in Lemma~\ref{lem:formulas for partials of emus}.
  \item Compute
  \begin{equation*}
    \bar \zeta^i_t = \left.\frac{\partial B}{\partial \avevec_i}\right|_{\bar \avevec_i} \cdot (\avevec_i(X^i_t) - \bar \avevec_i).
  \end{equation*}
  \item Compute an estimate $\bar \chi_i^2$ of the integrated autocovariance of
  $\bar \zeta^i_t$ using an algorithm such as ACOR.\cite{acor}
  \item Compute the estimate of $\sigma^2$:
  \begin{equation}\label{eq:approx formula for asymptotic variance}
    \bar \sigma^2 = \sum_{i=1}^L \frac{\bar \chi_i^2}{\kappa_i}.
  \end{equation}
\end{enumerate}
Since $\bar F$, $\bar \avevec$, and  $z$ are all computed in the process of obtaining the EMUS averages, estimating $\bar \sigma^2$ only requires one additional pass over the simulation data.  This additional cost is insignificant compared with that of computing the trajectories.

Both \eqref{eq:formula for asymptotic variance}
and its approximation~\eqref{eq:approx formula for asymptotic variance}
decompose the asymptotic variance of EMUS into a sum of contributions from each window.
By comparing the sizes of terms in the sum, we can determine the degrees to which different windows contribute to the error.
In principle, this information can be used to guide modification of the parameters of the simulation to improve efficiency.
For instance, one might adjust the proportion of samples allocated to each window, $\kappa_i$, to minimize the asymptotic variance.
From \eqref{eq:asympt var emus in terms of M processes}, the asymptotic variance $\sigma^2$ is minimized when $\kappa_i \propto\chi_i$ (see \ Zhu and Hummer\cite[42]{zhu2012convergence}).
Consequently, we can define the \textit{relative importance} of window $i$ as
\begin{equation}\label{eqn:relative importance}
\mu_i = L\frac{\chi_i}{\sum_{k=1}^L \chi_i},
\end{equation}
where the normalization is chosen so that $\mu_i=1$, regardless of $L$, if all windows have the same importance.
The relative importance represents how many samples would be allocated to a window to optimally estimate a specific observable, compared to a uniform distribution over all umbrellas.

\subsubsection{Numerical Results}\label{secn:alanineerrors}

\begin{figure}
  \includegraphics{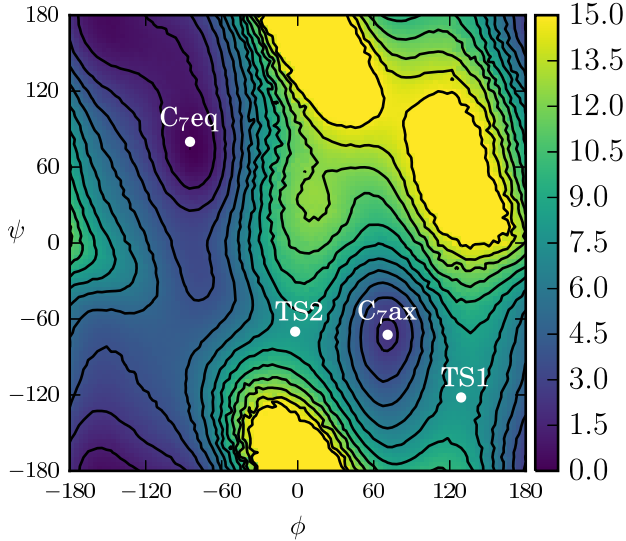}
  \caption{Potential of mean force obtained from US with biases on the $\phi$ and $\psi$ dihedral angles.  Major basins and barriers on pathways connecting them are indicated.  The scale bar indicates PMF values in kcal/mol, and the contour spacing is 2 $k_B T$.   The surface is constructed from simulation data accumulated in histograms with 100 bins in each collective variable.  See text for simulation details.}
   \label{fig:2D_umb_fes}
\end{figure}


To study the behavior of these estimates, we performed a two-dimensional umbrella sampling calculation with restraints on the $\phi$ and $\psi$ dihedral angles of the alanine dipeptide.  Parameters were the same as in the one-dimensional calculation above, with the addition of 20 bias functions in the $\psi$ dihedral with the same force constant, creating a grid of 400 windows.  Each window was equilibrated for 40 ps and sampled for a further 150 ps, with the collective variable values output every 10 fs.  

In Figures \ref{fig:2D_umb_fes} and \ref{fig:importance_C7ax_barrier}A, we plot the two-dimensional PMF from EMUS 
and the importances for the free energy difference between two windows located at the $C_7$ equatorial and $C_7$ axial configurations. Comparison shows that the importances are high for windows on low free energy pathways between the two windows of interest. Two such pathways exist.  In the representation in Figure \ref{fig:2D_umb_fes}, one proceeds up and to the left of the $C_7$ equatorial basin and then (via the periodic boundaries) enters the $C_7$ axial basin through transition state 1 (TS1 in Figure \ref{fig:2D_umb_fes}).  The other pathway proceeds down then right through transition state 2 (TS2 in Figure \ref{fig:2D_umb_fes}).  Of these two pathways, the first has a lower free energy barrier.  We observe that the EMUS importances are larger for windows located on this pathway.  In contrast, windows off these pathways in regions with high free energies are given very low importances.

\begin{figure*}
  \includegraphics[scale=1.0]{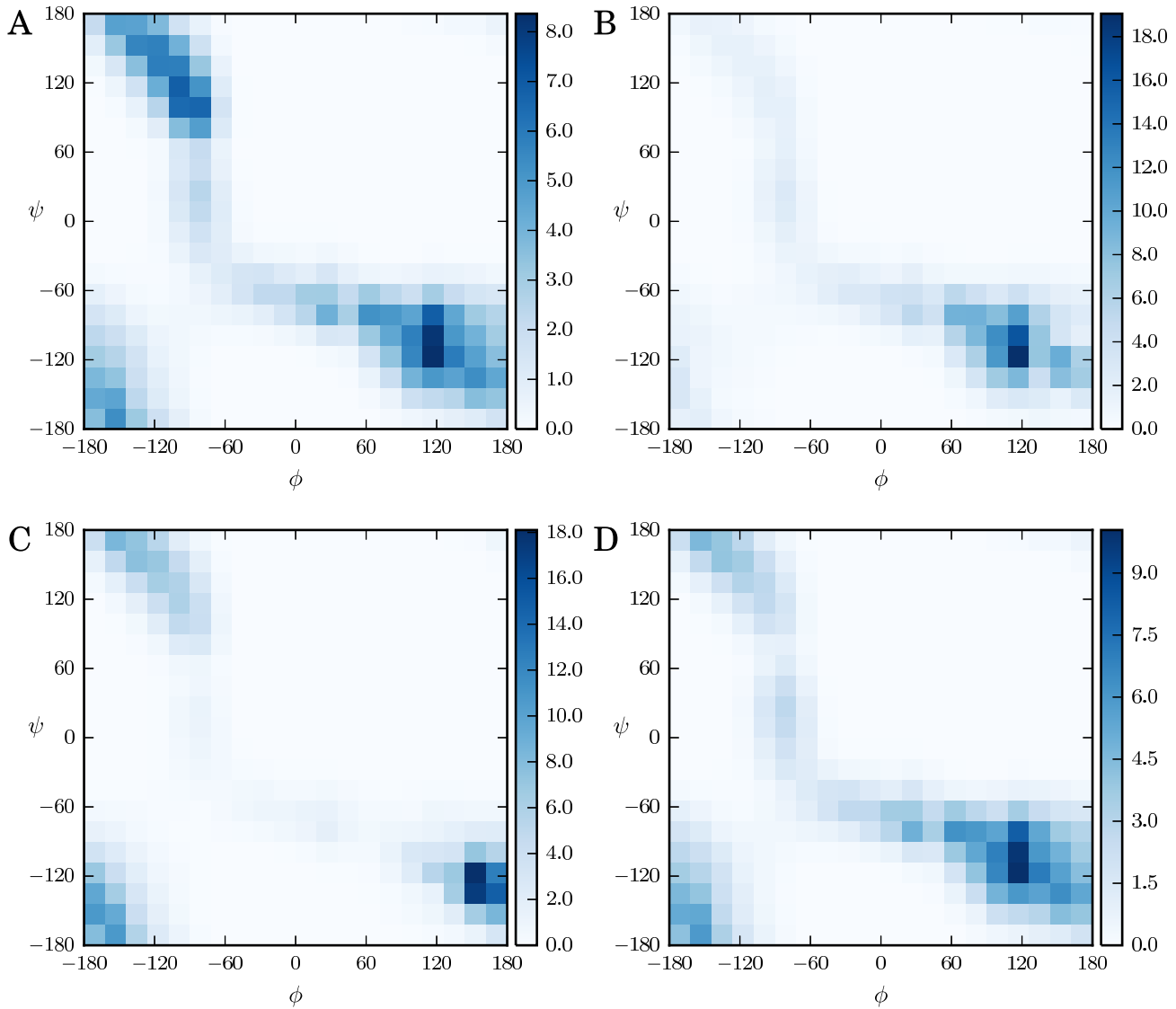}
  \caption{EMUS relative Importances. (A) Relative importances for the free energy difference between windows in the $C_7$ axial and $C_7$ equatorial basins.  The window in the $C_7$ equatorial basin is centered at (--81,81), and the window in the $C_7$ axial basin at (63,--63).  (B) Window importances for the free energy difference between windows in the $C_7$ axial basin and at TS1.  Windows are centered at (63,--63) and (135,$-117$), respectively.   (C) Importances for the free energy of the window at TS1.  (D) Importances for the window in the $C_7$ axial basin.}
  \label{fig:importance_C7ax_barrier}
\end{figure*}

We expect the importances to depend on the computed average.  To illustrate that this is the case numerically,
we show the log importances for the free energy difference between a window in the $C_7$ axial basin and one located on TS1 in Figure \ref{fig:importance_C7ax_barrier}B.
Compared to Figure \ref{fig:importance_C7ax_barrier}A, the importances are higher in the $C_7$ axial basin and lower in the  $C_7$ equatorial basin, which highlights that the importances depend on the average computed and do not simply mirror the free energy.
In Figures~\ref{fig:importance_C7ax_barrier}C and \ref{fig:importance_C7ax_barrier}D,
we plot the importances for estimating the window free energy (not the free energy difference)
of the window on TS1 and the window in the $C_7$ axial basin, respectively.
We note that the importances in the $C_7$ equatorial basin are higher in Figures \ref{fig:importance_C7ax_barrier}C and \ref{fig:importance_C7ax_barrier}D than in \ref{fig:importance_C7ax_barrier}B.
This suggests that when the free energy difference between the two windows is considered,
there is some cancellation of the errors arising in the $C_7$ equatorial basin.
%

\subsubsection{Comparison with Other Algorithms for Determining Error Contributions}\label{secn:zhuhummer}

Zhu and Hummer \cite{zhu2012convergence} proposed an algorithm for determining window free energies by calculating the mean restraining forces for each window and using thermodynamic integration to estimate free energy differences between adjacent windows. These are combined using least squares to calculate window free energies.  Like EMUS, this algorithm allows one to construct error estimates that can be decomposed into contributions from individual windows.  The authors give an expression for the error in the free energy of one window.  This expression can be easily extended to the free energy difference between two windows, giving
\begin{equation}\label{eq:zh_var_estimate}
\var \left(\Delta G_{ji}\right) = \sum_k^L \var \left(\sum_{\alpha}^{D}\left(c_{jk\alpha}-c_{ik\alpha}\right)\bar{f}_\alpha^k\right),
\end{equation}
where $\bar{f}_\alpha^k$ is the average force exerted by the bias function for window $k$ in the $\alpha$-th dimension.
The constants $c_{ik\alpha}$  and $c_{jk\alpha}$ are defined in Appendix~A of Zhu and Hummer~\cite{zhu2012convergence}.
The authors propose that these error estimates are applicable to WHAM and other umbrella sampling algorithms.

Both \eqref{eq:formula for asymptotic variance} and \eqref{eq:zh_var_estimate} are sums of contributions from individual windows.  Using the formalism introduced in Section~\ref{sec:estimate of asymptotic var}, we define the process
\begin{equation}\label{eq:defn_ZH_err_process}
  \zeta^{k,ZH}_t = \sum_{\alpha}^D \left(c_{jk\alpha}-c_{ik\alpha}\right) \left({f}_{\alpha}^k\left(X_t^k\right) -\<{f}_{\alpha,n}^k\>\right),
\end{equation}
and $\chi_i^{ZH}$ as the integrated autocovariance of $\zeta^{k,ZH}_t$.  This allows us to define importances for the Zhu and Hummer algorithm analogously to those for EMUS (see \eqref{eqn:relative importance}).

\begin{figure*}
  \includegraphics[scale=1.0]{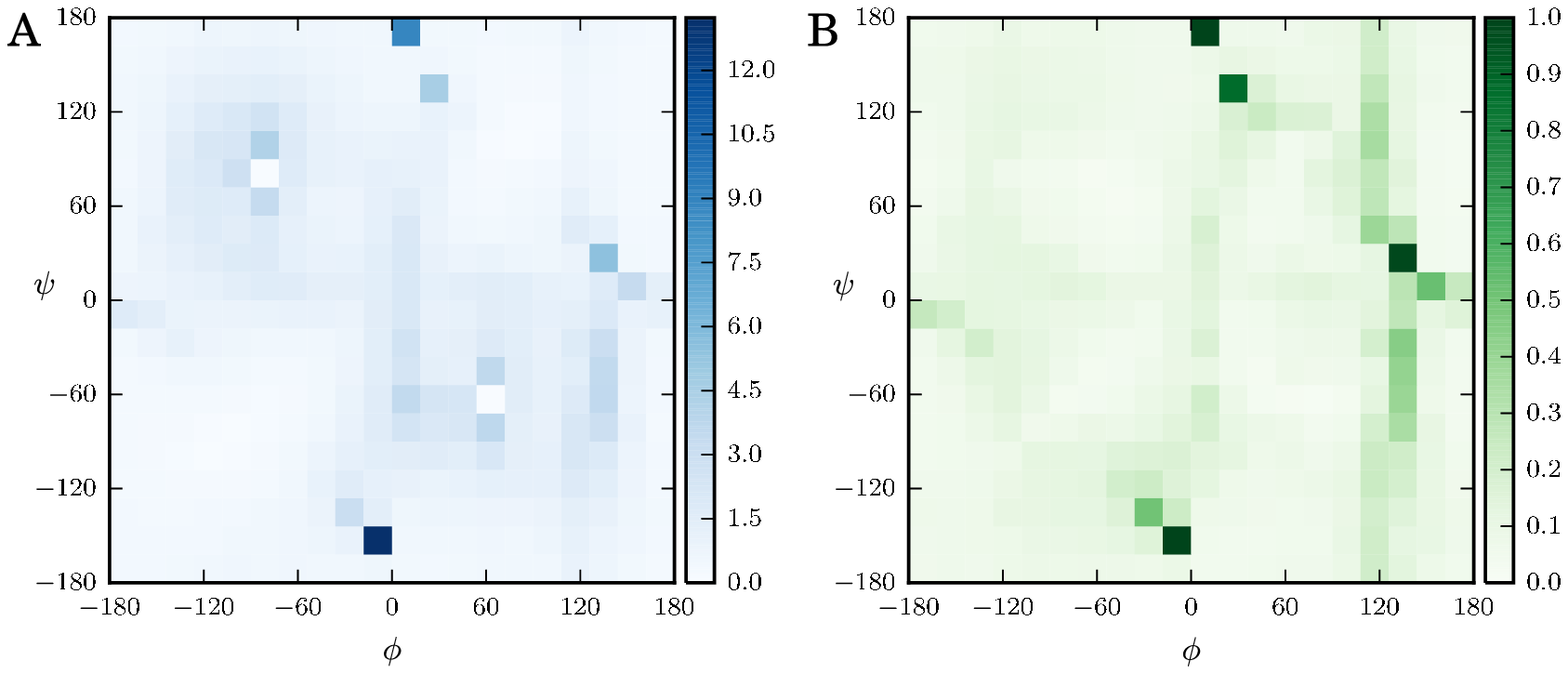}
  \caption{Comparison to Zhu and Hummer. (A) ZH estimates for the relative importances for the free energy difference between windows in the $C_7$ axial and $C_7$ equatorial basins.  Compare with Figure \ref{fig:importance_C7ax_barrier}A.  (B) Autocorrelation times of the trajectory $\zeta_t^{k,ZH}$ in each window.  The largest value observed is 3 ps, but the scale is limited to 1 ps for visual clarity.}
  \label{fig:znh_importance}
\end{figure*}

We applied the ZH error analysis to the two-dimensional umbrella sampling data used in Section \ref{sec:estimate of asymptotic var} and calculated the importances for the same free energy difference as in Figure~\ref{fig:2D_umb_fes} (Figure \ref{fig:znh_importance}A). Rather than falling along the low free energy pathways, as for EMUS, the ZH importances mirror the autocorrelation times (Figure \ref{fig:znh_importance}B).  This indicates that windows have large ZH importances if they have large fluctuations in free energy.  We thus see that different algorithms emphasize different windows in US.  We can understand the behaviors of these two algorithms by considering \eqref{eq:defn_EMUS_err_process} and \eqref{eq:defn_ZH_err_process}.
The factor $\partial B/\partial \avevec_i$ in \eqref{eq:defn_EMUS_err_process} depends explicitly on the normalization constant for each window (see Lemma~\ref{lem:formulas for partials of emus}).   By contrast, the factor $\left(c_{jk\alpha}-c_{ik\alpha}\right)$  in \eqref{eq:defn_ZH_err_process} depends only on the relative positions of the windows and not on their free energies.

\section{Conclusions} \label{sec:conclusions}
The success of an umbrella sampling simulation depends on the choice of windows (i.e., how the system is biased) and the estimator used to determine the normalization constants of the windows from trajectory data.  Here, we show that the normalization constants can be obtained from an eigenvector of a stochastic matrix.  This eigenvector method for umbrella sampling (EMUS) can be viewed as the first step in an implementation of the MBAR estimator.  In our experience, this first step is nearly converged, and machine precision is reached in only a few iterations.  Moreover, each iteration yields a consistent estimate. Most importantly, error analysis is considerably easier for EMUS than MBAR because the elements of the stochastic matrix do not depend on the normalization constants.

Within this framework, we revisited a common scaling argument for justifying umbrella sampling and showed that once the number of windows becomes  sufficiently large, the scheme does not benefit from the addition of more windows (i.e., the variance is not further reduced for a fixed computational effort).  We show that an alternative scaling regime in which temperature decreases (or, equivalently, free-energy barrier heights increase) as the number of windows increases best demonstrates the potential benefits of the umbrella sampling strategy;  in that regime the efficiency improvement over direct simulation is exponential  in the (inverse) temperature.

Our main theoretical result is a central limit theorem for the statistical averages obtained from EMUS.  This result relies on the delta method, which  we use to characterize the propagation of the asymptotic error through the solution of a stochastic matrix eigenproblem. The central limit theorem provides an expression for the asymptotic variance of the averages of interest.  It is a sum of contributions from individual windows, and we use it to develop a prescription for estimating the relative importances of windows for averages from the trajectory data.  For free energy differences of states of the alanine dipeptide, we find numerically that the importances are largest for low-free energy pathways that connect the specific states of interest.   These results suggest that the importances could serve as the basis for adaptive schemes that focus computational effort on the windows of most importance.  Even more interesting would be to adjust the bias functions as the simulation progresses.   How best to do this remains an open area of investigation.


\appendix
\section{Consistency of Iterative EMUS}\label{apdx: consistency}
Here, we prove that for fixed finite $m$, 
$z^m$ is a consistent estimator of the vector of normalization constants $z$.
With the initial guess $z^0 = n$, the result, $z^1$, of the first iteration is the EMUS estimator.
We now show that $z^2$ is also consistent in the sense that if the trajectory averages defining $\bar F(w)$  converge  then $z^2$ converges to $z.$  Because the various sequences in question are sequences of random variables one must specify what is meant by convergence.  The argument below applies when convergence refers either to convergence in probability or convergence with probability one (almost sure convergence) as long as the notion of convergence is consistent throughout.  Consistency of $z^m$ follows by induction on $m$ using a similar argument.

For any positive vector $w$, we define
\[
u_k = \prod_{j\neq k} \frac{w_j}{n_j},
\]
and we write
\[
\bar F_{ij}(w) = \frac{1}{N_i}\sum_{t=0}^{N_i-1} h_{ij}(u,x),
\]
where
\[
h_{ij}(u,x) =  \frac{ u_i \psi_j(x)}
{\sum_{k} u_k \psi_k(x) }.
\]
We then observe that
\[
\partial_{u_k} h_{ij}(u,x) =
 \begin{cases}
\frac{1}{u_i}  h_{ij}(u,x) [1 -h_{ii}(u,x)], &\text{if } k=i\\
-\frac{1}{u_i} h_{ij}(u,x)h_{ik}(u,x), & \text{if }k\neq i.
\end{cases}.
\]
Because $h_{ij}(u,x) \leq u_i / u_j,$
\[
\lvert \partial_{u_k} h_{ij}(u,x) \rvert \leq \frac{1}{u_j}\max\left\{1, \frac{u_i}{u_k}\right\}.
\]
Therefore,
\begin{equation}\label{eq: uniform modulus of continuity}
\lvert h_{ij}(\tilde{u},x) - h_{ij}(u,x)\rvert \leq \gamma(u,\tilde{u}),
\end{equation}
for a continuous function $\gamma$ defined for positive vectors $u$ and $\tilde{u}$
 and such that $\gamma(u,u)=0$ for any $u$.
The function $\gamma(u,\tilde{u})$ must explode when the entries of $u$ or $w$ approach $0$.
Now define
\begin{equation*}
u_k = \prod_{j\neq k} \frac{z^1_j}{n_j} \text{ and } \tilde{u}_k = \prod_{j\neq k} \frac{z_j}{n_j},
\end{equation*}
where $z$ is the exact vector of normalization constants.
By~\eqref{eq: uniform modulus of continuity}, we have
\begin{equation}\label{eq: uniform bound on Fz}
\lvert \bar F_{ij}(z^1) - \bar F_{ij}(z)\rvert \leq \gamma(u, \tilde{u}).
\end{equation}
As the number of samples $N$ increases, $\bar F(z)$ converges to $F(z).$
Moreover, since $z^1$ is the EMUS estimate of $z$,
$z^1$ converges to $z.$
Therefore, $u$ converges to $\tilde{u}$,
and~\eqref{eq: uniform bound on Fz} implies that $\bar F_{ij}(z^1)$ converges to $F_{ij}(z)$.
Finally, since the function mapping an irreducible, stochastic matrix to its invariant vector
is continuous, it follows that $z^2$ converges to the invariant vector of $F(z)$, which is $z$.
This verifies the consistency of $z^2$.

\section*{Acknowledgments}

This research was supported by National Institutes of Health (NIH) Grant Number 5 R01 GM109455-02.  We wish to thank Jonathan Mattingly, Jeremy Tempkin, and Charlie Matthews for helpful discussions.


\providecommand{\noopsort}[1]{}\providecommand{\singleletter}[1]{#1}%

\end{document}